\numberwithin{equation}{section}
		\pgfpathrectanglecorners{\pgfpointorigin}{\pgfpoint{3cm}{3cm}}%
\crefname{section}{Section}{Sections}
\crefname{appendix}{Appendix}{Appendices}
\newtheorem{theorem}{Theorem}[section]
\newtheorem*{principle*}{Hamilton's principle}
\theoremstyle{definition}
\newtheorem{defi/}{Definition}[section]
\begin{document}

\titlepage

\begin{flushright}
	MS-TP-24-34\\
TUM-HEP-1547/24\\
CERN-TH-2022-143
\end{flushright}

\vspace*{1.2cm}

\begin{center}
	{\Large \bf  Generalized Wilson lines
		\\	\vspace*{0.5cm}
	 	   and the gravitational scattering 
	 	   of spinning bodies
	 	 
	}
	
	\vspace*{1.5cm} \textsc {D. Bonocore$^{ \sharp}$, A. Kulesza$^{ \flat}$, J. 
	Pirsch$^{ \flat}$} \\
	
	\vspace*{1cm} 
	
$^\sharp$ {Physik Department T31,  Technische 
	Universit\"at M\"unchen, \\
	James-Franck-Straße 1, D-85748, 
	Garching, Germany}	
	
\vspace*{0.2cm} 
	
$^\flat$ Institut f\"{u}r Theoretische Physik, Universit\"{a}t M\"{u}nster, Wilhelm-Klemm-Stra\ss e 9,
	D-48149 M\"{u}nster, Germany\\

\end{center}

\vspace*{7mm}

\begin{abstract}
	\noindent 
A generalization of Wilson line operators at subleading power in the soft 
expansion has been recently introduced as 
an 
efficient building block of gravitational scattering 
amplitudes for non-spinning objects. The classical limit in this picture 
corresponds to the strict Regge limit, where the Post-Minkowskian 
(PM) expansion corresponds to the soft expansion, interpreted as a sum over 
correlations of soft emissions.
Building on the well-studied worldline model with ${\cal N}=1$ supersymmetry, 
in this work we extend the generalized Wilson line (GWL) approach to the case 
of spinning gravitating 
bodies. 
Specifically, at the quantum level we derive from first-principles a 
representation for the spin $1/2$ GWL
that is relevant for the all-order factorization of next-to-soft gravitons with 
fermionic matter, thus generalizing the exponentiation of single-emission 
next-to-soft theorems.  
At the classical level, we identity the suitable generalization of Wilson line 
operators that enables the generation of classical spin observables at linear order in spin. 
Thanks to the crucial role played by the soft expansion, the map from Grassmann variables to classical spin is manifest. We also comment on 
the relation between the GWL approach and the  Worldline Quantum Field Theory as well as the Heavy Mass Effective Theory formalism. 
 We validate the approach  
by rederiving known results in the conservative sector at 2PM order.

\end{abstract}


\newpage

\section{Introduction}
\label{sec:intro}

Wilson lines are ubiquitous objects in quantum field theories. In particular, when evaluated 
on 
 paths composed of straight (infinite) segments, 
they exhibit universal renormalization properties \cite{Korchemsky:1993uz} that 
are extremely powerful in a variety of perturbative calculations. For example, 
 they provide an invaluable tool for a gauge 
invariant formulation of factorization theorems in QCD 
\cite{Collins:1981uk} which are vital for an all-order resummation of many 
collider 
observables. They are crucial building blocks for the construction
of effective field theories both in gauge theories \cite{Bauer:2000yr} and 
gravity \cite{Beneke:2012xa} and the systematic
treatment of power corrections \cite{Beneke:2002ph, Larkoski:2014bxa, 
Beneke:2022ehj}.  
They are also of primary importance for supersymmetric 
theories, where maximally helicity violating (MHV) $n$-legs scattering 
amplitudes in the planar limit are dual to 
$n$-cusped Wilson loops \cite{Drummond:2008vq}. However, a naive application of 
Wilson lines is sometimes not enough to cope 
with more general problems. In these situations, in order to minimally modify 
the formalism, it is often
 desirable to generalize these Wilson line operators. This is the case for the 
 aforementioned duality, 
where supersymmetric generalizations have 
shown how to extend the relation beyond the MHV case \cite{Caron-Huot:2010ryg}. 
It is also the case for 
the soft expansion, where Generalized 
Wilson Lines (GWLs) have been defined in gauge theories  
\cite{Laenen:2008gt, Bonocore:2020xuj} and gravity \cite{White:2011yy, 
Bonocore:2021qxh} in 
order to include power corrections in factorization theorems, thus reorganizing 
the effect of next-to-soft theorems to an arbitrary 
number of emissions. 

In the light of this wide range of applications, it is natural to ask whether 
(generalized) Wilson lines could provide a useful representation for the 
development of analytic tools in the gravitational wave program. In fact, a 
wealth of new methods has been proposed over the recent years, some of them 
driven by the possibility to compute observables in classical General 
Relativity (GR) 
with Quantum Field Theory (QFT) techniques \cite{Kosower:2018adc, 
Kalin:2020mvi,Dlapa:2021npj,
	Mougiakakos:2021ckm,Riva:2021vnj,Mougiakakos:2022sic,Riva:2022fru,
	Mogull:2020sak,Jakobsen:2021zvh,Jakobsen:2023ndj,Cristofoli:2021vyo,Bastianelli:2021nbs,
	Comberiati:2022cpm, Comberiati:2022ldk,Mandal:2022ufb, 
	Bini:2020uiq,Bini:2019nra,Bini:2020wpo,Blumlein:2020pyo,
	Levi:2015msa,Levi:2020kvb,Kim:2022pou,Capatti:2024bid,Cristofoli:2021jas, Damgaard:2019lfh, Damgaard:2023vnx,Mougiakakos:2024nku}. 
In this regard, the possibility to map the bound state problem to 
scattering data 
\cite{Cheung:2018wkq,Kalin:2019rwq,Kalin:2019inp,Saketh:2021sri,Gonzo:2023goe,Cho:2021arx,Gonzo:2024zxo}
 makes the study of the classical limit of gravitational scattering amplitudes 
 particularly 
interesting. 
For weak 
	gravitational 
fields and large impact parameters, in particular, gravitating bodies can 
be represented by highly energetic particles glancing off each other and 
interacting via low energetic gravitons.     
Therefore, one could naively guess that
the soft expansion and the Regge limit, both of which have a natural 
representation in terms of Wilson lines \cite{Korchemskaya:1994qp, 
Korchemskaya:1996je,Nguyen:2023ibj}, might be key 
for an efficient identification of classical contributions.
 Motivated by this picture 
and building on more recent developments \cite{Melville:2013qca, Luna:2016idw}, 
in 
\cite{Bonocore:2021qxh} we argued that suitably defined {\it classical} 
Generalized Wilson lines provide indeed 
a natural language for the computation of observables in the Post Minkowskian 
(PM) expansion. 

More specifically, focusing on the non-spinning case in the conservative sector 
up to 2PM, in \cite{Bonocore:2021qxh} we showed how the 
classical limit of scattering amplitudes can be extracted by taking 
their strict Regge limit. This is in contrast to 
the 
traditional eikonal approach 
\cite{Amati:2007ak,Ciafaloni:2018uwe,DiVecchia:2021bdo,DiVecchia:2021ndb, Bellazzini:2022wzv} where 
the PM 
expansion corresponds to an expansion in\footnote{Throughout this work we 
denote 
	with $s,t,u$ the standard Mandelstam variables.}  $t/s$. To succeed in this 
	task, we 
	constructed the 
scalar GWL starting from the worldline quantization of the relativistic 
particle. This strategy, which is the starting point also for
the recent Worldline 
Quantum 
Field Theory (WQFT) approach \cite{Mogull:2020sak, 
Jakobsen:2021zvh, Jakobsen:2021lvp, Jakobsen:2021smu}, consists of generating   
 the perturbative series from a path  
integration over both the graviton field $h^{\mu\nu}$ and the trajectory $x$ of 
the 
gravitating body (neglecting of course quantum terms). 
This aspect differentiates both GWL and WQFT from worldline EFT approaches 
\cite{Goldberger:2004jt, 
	Kalin:2020fhe,Kalin:2022hph, Goldberger:2022ebt,Foffa:2019rdf,Foffa:2019hrb,Blumlein:2021txe,Blumlein:2021txj,Cheung:2023lnj,Cheung:2024byb}, where the only 
propagating field is the graviton and trajectory deflections are obtained  
by solving the 
equations of motion iteratively. 
However, unlike the WQFT approach, the GWL representation is derived by 
integrating 
over $x$ each external line of the scattering amplitude before any graviton 
integration, thus 
generating observables by a VEV of suitably defined operators. This provides a 
connection with the computation of 
generalized soft 
functions that have been a focus of recent studies in QCD phenomenology 	
\cite{Becher:2010pd, 
	Moult:2018jjd, Beneke:2018gvs, Bahjat-Abbas:2019fqa, Beneke:2024cpq}. 
As such, the GWL representation establishes a clear relation between the PM 
expansion and the 
soft expansion, interpreted as a sum over correlations among multiple soft 
emissions rather than a sum over the momentum of the graviton.\footnote{The 
	possibility to control the classical scattering of gravitating body via the 
	soft expansion has been also described in 
    	\cite{Guevara:2018wpp,DiVecchia:2021bdo,DiVecchia:2022piu,Alessio:2024onn}.}.

In this paper we generalize the work of 
\cite{Bonocore:2021qxh}
by considering the scattering of spinning objects, focusing separately on the quantum case with spin $1/2$ particles and contributions linear in spin at the classical level. 
Specifically, starting 
from the well-known ${\cal N}=1$ supersymmetric model for the relativistic 
particle in curved space, we first derive the exponentiation of soft gravitons 
dressing the external Dirac states of a quantum scattering amplitude in terms 
of {\it soft} GWLs. We then move to the classical problem and derive a suitable 
representation of 2PM observables in terms of a VEV of {\it classical} GWLs.
There are numerous reasons for carrying out this program.

At the pure classical level, the motivation is fairly obvious, 
given 
the high demand of accurate theoretical predictions in gravitational wave 
astronomy and the pivotal role of spin. In this regard, state-of-the-art 
  PM calculations include the 3PM scattering angle and impulse to quadratic 
  order in 
  spin 
 \cite{Liu:2021zxr,Damgaard:2023ttc,FebresCordero:2022jts},
 while at 2PM the observables are known to even higher orders in spin 
 \cite{Aoude:2023vdk,Aoude:2022thd,Bern:2022kto,Bautista:2021wfy,Bautista:2022wjf,Bohnenblust:2024hkw,Bjerrum-Bohr:2023iey}. In particular, in worldline supersymmetric models, higher-spin degrees of freedom are
  notoriously difficult to implement, because of well-known no-go theorems \cite{Howe:1988ft}, though
 a worldline formulation that includes spin effects at fourth-order in spin at 2PM has been recently proposed \cite{Haddad:2024ebn}.
We consider 
our work as a 
first step in the development of a new method which might facilitate the 
computation of higher spin corrections on the worldline \cite{newpaper}. 
Specifically, although the starting point is based on worldline supersymmetric 
models as in the WQFT \cite{Mogull:2020sak}, 
by integrating out all worldline 
degrees of 
freedom (i.e. trajectory $x$ and spin variables $\psi$) we obtain a suitable 
definition for classical GWLs such that its VEV generates classical 
observables, hence opening up the possibility to use renormalization techniques 
developed for Wilson line correlators. 

On top of this, at the classical level, we clarify   
an aspect which has not been addressed in 
previous supersymmetric 
worldline calculations and that becomes particularly transparent in our 
approach. 
Specifically, when defining classical spin on the worldline one has 
to map the worldline Grassmann variables $\psi$ representing quantum spin 
(which are a necessary ingredient of the supersymmetric 
theory) to the classical spin tensor $S^{\mu\nu}$ in GR, defined as the canonical momentum conjugated to the angular 
velocity of an extended body. The procedure discussed in WQFT
consists of mapping these supersymmetric degrees of 
freedom to the worldline EFT, 
where spin is not represented by supersymmetric Grassmann variables, but rather 
by a tower of operators containing powers of $S^{\mu\nu}$ multiplied by the 
corresponding matching coefficients.  
In our approach, however, the soft expansion in the 
construction of the GWL removes the need to map to the EFT. Specifically, 
as we will discuss, the presence of the Lorenz spin generator $\sigma_{\mu\nu}$ 
in the soft GWL 
yields a map between the supersymmetric 
Grassmann variables and the classical 
spin tensor, such that the classical GWL can be derived in terms of a classical 
spin tensor. 

However, the motivation for this work is not restricted to the pure classical regime. 
Indeed, at the quantum level, an all-order treatment of subleading soft
 theorems for spinning particles \cite{Cachazo:2014fwa} 
has continued to attract  
attention in the recent years \cite{Guevara:2018wpp, Beneke:2022pue}. It is therefore natural to 
ask whether the soft GWL representation of 
\cite{White:2011yy} 
and \cite{Bonocore:2021qxh} can be extended to the case of fermionic matter. 
This
generalization is not immediate. Indeed, while it is 
clear that for a fixed number of emissions one should recover the result of 
soft theorems, thanks to which spin effects are described by a 
coupling to the Lorentz spin generator, the role of this
generator to all orders in perturbation theory is not immediately evident in a diagrammatic analysis.  
In 
particular, it is not clear how correlated subleading soft emissions depend on the spin of the 
hard emitter. Secondly, 
in the gauge theory case it has been shown in \cite{Bonocore:2020xuj} that
worldline supersymmetry is key to prove that the background field in the 
numerator of dressed propagators does not contribute in the asymptotic limit, 
making the derivation similar to the more studied quantization on a closed 
worldline which 
yields the one-loop effective action. Does a similar mechanism 
occur in gravity? 
We shall see that by deriving the soft GWL from 
first-principles in the 
supersymmetric 
worldline model we will 
generalize single-emission on-shell soft theorems to the case of an arbitrary number of 
off-shell graviton emissions, including pairwise correlations, and we will thus
address all these questions.

The parallel between the soft and the classical GWL is not only useful for 
the identification of the map to classical spin but also to 
highlight the 
different exponentiation of classical and quantum spin variables at a given 
order in the 
soft expansion. Specifically, as we shall discuss, the enhancement of spin in 
the classical limit 
yields a dependence over classical spin both in the leading eikonal (E) and in 
the subleading (i.e. next-to-eikonal (NE)) terms, in sharp contrast with the 
soft GWL which depends on the quantum spin only through single emissions at NE 
level. 

Finally, we shall note that constructing the 
spinning GWL from first principles in the supersymmetric worldline model enables a  
relation between the WQFT and other amplitude-based 
methods.
In particular, the relation with the Heavy Mass EFT (HEFT) of 
\cite{Brandhuber:2021eyq,Brandhuber:2023hhy,Brandhuber:2023hhl} emerges neatly, although classical observables in terms of on-shell amplitudes in a 
heavy mass expansion seems quite different from a supersymmetric worldline 
model where the classical limit is achieved at the Lagrangian level. The GWL 
approach is somewhat intermediate and can thus be used to provide a map between 
the two approaches, and more generally between worldline models and on-shell 
methods. 

The structure of the paper is as follows. In \cref{sec_worldlineexp}
we derive the factorization properties of subleading soft gravitons dressing 
the external states of a quantum scattering amplitude in terms of soft GWLs.
In doing so, we devote special care to the quantization of the supersymmetric 
model since, in order to derive the GWL, we adopt conventions which are 
somewhat less common in the worldline literature, such as $px$-ordering and a 
worldline that extends from a localized hard interaction to infinity.  
In \cref{sec_ampltidues_and_classical} we consider the classical limit, 
and 
derive a representation for the classical GWL. We cross-check the formalism 
with known results at 2PM order. We then comment on the relation between the 
WQFT and the HEFT approaches. We finally conclude in 
\cref{sec_concl}. Some technical calculations are presented in separated 
appendices. 
\section{Soft exponentiation on the 
spinning worldline}\label{sec_worldlineexp}
Building on the well-known $\mathcal{N}=1$ supersymmetric 
model for the relativistic particle in curved space \cite{Howe:1988ft}, in this 
section we discuss the exponentiation of an arbitrary number of (subleading) 
soft 
graviton emissions dressing a fermionic external state of a quantum scattering 
amplitude. Specifically, we provide a first-principles derivation of the 
fermionic 
gravitational generalized 
Wilson line, thus generalizing the scalar case discussed in 
\cite{White:2011yy} and \cite{Bonocore:2021qxh}.

A number of features make the spinning case distinct from the scalar case 
and related work discussed in the literature \cite{Bastianelli:2002qw,Bastianelli:2006rx,Bastianelli:2013tsa,Bastianelli:2023oyz}. First of all, we consider 
dressed 
propagators extending from a localized hard interaction to infinity. This 
choice, which  
was also made in previous work on the GWL \cite{White:2011yy, Laenen:2008gt, 
	Bonocore:2020xuj, Bonocore:2021qxh}, is particular to the case of soft 
	exponentiation and will be revisited in the Regge limit discussed in the 
	second part of this paper. It is a somewhat less explored case in the 
	literature 
	about the worldline formalism, where the length of the line is usually 
	assumed to be finite or, as in the case of the WQFT, from $-\infty$ to 
	$+\infty$.
On a more technical side, in contrast to the existing 
literature 
on the quantization of the ${\cal N}=1$ model that assumes Weyl ordering, we 
choose a rather 
uncommon $px$-ordering prescription\footnote{In
	this work $px$-ordering means that 
	we put all $\hat p$'s to the left of the $\hat x$'s 
	(e.g. $[\hat x\hat p]_{\text{px}}=\hat p\hat x+[\hat x,\hat p]$). For a 
	comparison with the more standard Weyl-ordering see 
	\cite{Bonocore:2021qxh}.}. This choice is however purely 
conventional, motivated by the fact that a path integral representation for 
the GWL in time-slicing regularization is more easily derived from a dressed propagator
between an initial state of definite position and a final state of definite momentum. 

We begin our discussion in 
\cref{sec:susy} with a brief review of how a dressed propagator for a spin $1/2$ particle in curved space
can be expressed in terms of the supersymmetric charges constructed from the corresponding worldline degrees of freedom.
We then work out a path integral representation for this dressed propagator in \cref{sec:setup}. In doing so, we pay special attention 
to some technical details 
concerning the construction of the fermionic path integral. In particular, 
following a well-established procedure that goes by the name of fermion 
doubling \cite{Bastianelli:2006rx}, we add 
spurious 
fermionic degrees of freedom (that we eventually integrate out) to cope with 
the real nature of the Grassmann fields. In this construction, we also 
show the elegant role that worldline 
supersymmetry plays in demonstrating that
the background soft graviton field does not contribute to the numerator of the 
dressed propagator in the asymptotic limit. This feature, which is relevant for 
spinning emitters, was already discussed in the gauge theory case in 
\cite{Bonocore:2020xuj}. Equipped with the path integral representation, in \cref{sec:GWL}
we finally solve
the integral in a weak field expansion, obtaining the sought representation for the
soft Generalized Wilson line. 

\subsection{From supersymmetric charges to dressed propagators}
\label{sec:susy}
We begin by considering a free scalar point particle. Apart from convention
purposes, this is useful since the scalar contribution will be isolated when discussing the spinor case later on. In flat spacetime, the classical  
worldline 
action in phase space 
with canonical variables $x^\mu(t)$ and $p^\mu(t)$,  which fulfill the Poisson 
bracket 
$\{x^\mu,p_\nu\}_{\text{PB}}=-\delta^\mu_\nu$. Such action is obtained by 
enforcing the constraint\footnote{We adopt the $+,-,-,-$ signature throughout.}
\begin{equation}
-p_\mu p_\nu 
g^{\mu\nu}+m^2\equiv 2H=0~,
\end{equation}
with Lagrange multiplier $e(t)$, to ensure the 
on-shell condition. This constraint, 
which in Dirac terminology is 
first class and thus generates gauge transformations (time reparameterization), 
is conventionally defined 
to be (twice) the Hamiltonian $H$. The Klein-Gordon equation in flat spacetime  
 then trivially follows 
after canonical quantization, by identifying $\hat p=i\partial_{\mu}$ and 
requiring the physical states $|\phi_{\text{phys}}\rangle$ 
to fulfill $\hat H|\phi_{\text{phys}}\rangle=0$. 

The transition from flat to curved spacetime introduces some technical 
difficulties, mainly due to the metric $g_{\mu\nu}$ being a function of 
worldline operator  
$\hat x^\mu(t)$. In particular, in analogy with the gauge theory case discussed 
in \cite{Bonocore:2020xuj}, the definitions of the Hamiltonian and the 
corresponding
covariant momentum operator $\hat\Pi_\mu$ are 
sensitive to the choice of the operator ordering. 
For gravity in particular, as 
discussed in \cite{Bonocore:2021qxh}, one can define 
the following 
 Hamiltonian
\begin{align}
\hat H&
= 
-\frac{1}{2}\hat\Pi_\mu  g^{\mu\nu}(-g)^{1/2}\hat 
\Pi_\nu (-g)^{-1/2}+\frac{1}{2}m^2 \notag \\
&= \frac{- {g}^{\mu\nu}\hat{\Pi}_\mu\hat{\Pi}_\nu+m^2}{2}
+\frac{i}{2} {g}^{\mu\nu} {\Gamma}_{\mu\nu}^\varrho
\hat{\Pi}_\varrho~.\notag 
\end{align}
Although such Hamiltonian is hermitian, this definition requires the following 
(non-hermitian) momentum
\begin{align}
\hat \Pi_\mu&=(-g)^{-1/2}\hat p_\mu(-g)^{1/2}~.
\label{mom}
\end{align} 
In $px$-ordering
this yields the following Hamiltonian
\begin{align}
H_{px}
&=\frac{1}{2}\left(- {p}_\mu 
 {p}_\nu {g}^{\mu\nu}+m^2
+i {p}_\mu\left(\partial_\nu {g}^{\mu\nu}
+ {g}^{\mu\nu} {\Gamma}^\alpha_{\nu\alpha}\right)\right)~.	
\label{Ham}
\end{align}
 
The inclusion of additional degrees of freedom such as color or spin can be 
achieved by enlarging the phase space by additional 
Grassmann fields. A particle of spin $\frac{\mathcal{N}}{2}$ in 
particular requires $\cal N$ real fermionic variables $\psi_i$ fulfilling 
$\{\psi_i^\mu,\psi_j^\nu\}_{\text{PB}}=-\eta^{\mu\nu}\delta_{ij}$. The corresponding action 
is constructed by 
adding 
$\mathcal{N}$ further constraints $Q_i\equiv p_{\mu}\psi_i^\mu=0$ and 
$J_{ij}\equiv i\psi^\mu_i\psi_{j\mu}$, with the corresponding Lagrange 
multipliers. These constraints generate the  
$\cal 
N$-extended supersymmetry algebra, which closes under Poisson brackets. 
In particular, one has
\begin{align}\label{algebra-N}
i\left\{Q_i,Q_j\right\}_{\text{PB}}=2H\delta_{ij}~.
\end{align} 
 After
quantization, the constraints lead  
to the massless Bargmann-Wigner 
equations for the multispinor wave function labeled by $\cal N$ spin indices. 
The massive case is more subtle, as it can be obtained by dimensional reduction 
of a 
massless model in $d+1$ dimensions. For practical calculations in four 
dimensions, it boils down to the introduction of an auxiliary 
variable $\psi^5$.

In this paper we focus on the massive spin $\frac{1}{2}$ case. In flat space, 
the corresponding worldline action 
takes the form
\begin{equation}\label{spin12Act}
	S=\int d\sigma \left(-p_\mu\dot{x}^\mu+\frac{i}{2}\psi^\mu\dot{\psi}^\nu\eta_{\mu\nu}-\frac{i}{2}\psi^5\dot{\psi}^5-eH-\chi Q\right).
\end{equation}
Here $\chi(t)$ and $e(t)$ denote the gravitino and the einbein fields, which 
together form a supergravity 
multiplet of this one-dimensional theory. 
The constraint $Q$ is then given by 
\begin{equation}\label{Qflat}
	Q=i(p_\mu \psi^\mu+m\psi^5).
\end{equation}
The Grassmann variables $\psi^\mu$ and $\psi^5$ fulfill the following Poisson bracket relations
\begin{align}
	\left\{\psi^\mu,\psi^\nu\right\}_{\text{PB}}&=-i\eta^{\mu\nu},\qquad
	\left\{\psi^5,\psi^5\right\}_{\text{PB}}=i,
	\label{poisson}
\end{align}
which immediately results in 
\begin{equation}\label{algebra-12}
	\left\{Q,Q\right\}_{\text{PB}}=i(p^2-m^2)=-2iH,
\end{equation}
in agreement with the general relation in \cref{algebra-N}.
Canonical quantization transforms \cref{poisson} to the following 
(anti-)commutation relations 
\begin{align}
	\left[\hat{x}^\mu,\hat{p}_\nu\right]&=-i\delta^\mu_\nu~,
	\qquad
	\left\{\hat{\psi}^\mu,\hat{\psi}^\nu\right\}=\eta^{\mu\nu}~,
	\qquad
	\left\{\hat{\psi}^5,\hat{\psi}^5\right\}=-1\label{eq:anticommutatorpsi5}~.
\end{align}
which are realized with 
\begin{align}
	\hat{p}_\mu&=i\partial_\mu,\qquad 
	\hat{\psi}^\mu=\frac{1}{\sqrt{2}}\gamma^\mu,\label{eq:operatorpsi}\qquad 
	\hat{\psi}^5=\frac{i}{\sqrt{2}}\gamma^5.
\end{align}
Correspondingly, $\hat{Q}$ then takes the form 
\begin{equation} 
	\hat{Q}=\frac{\gamma^5}{\sqrt{2}}(i\tilde{\gamma}^\mu\partial_\mu-m),
\end{equation}
where $i\gamma^5\gamma^\mu=\tilde{\gamma}^\mu$ is also a representation of the $\gamma$-matrices.
As expected, upon implementing the first class constraint for $\hat Q$ in Fock 
space 
via  $\hat{Q}\ket{\psi_{\text{phys.}}}=0$, we recover the standard Dirac 
equation in flat space. 
This observation is key in order to identify the constraints $\hat{Q}$ and 
$\hat H$ 
in curved 
space, as we are going to discuss. 

Classically, the covariant Dirac equation in curved space reads
\begin{equation}
	\left[i\gamma^ae_a^\mu\left(\partial_\mu-\frac{i}{2}\omega_\mu^{ab}\sigma_{ab}\right)-m\right]\Psi=0,  
\end{equation}
where $\omega_\mu^{ab}= e^a_\nu \partial_\mu e^{b\nu}+e^a_\nu\Gamma_{\mu\sigma}^{\nu} e^{b\sigma} $ is the spin connection expressed via the vierbein $e^a_\mu$ and $\sigma_{ab}=\frac{i}{4}\left[\gamma_a,\gamma_b\right]$ is the spin $1/2$ Lorentz generator. As usual in GR the $\gamma$-matrices are defined in the locally flat reference frame defined by the vierbein, which we denote with Roman indices.
Correspondingly, after making use of the momentum defined in 
\cref{mom} and 
recalling the flat space expression in \cref{Qflat}, it is not difficult to see 
that the
quantum mechanical operator $\hat{Q}$ leading to the Dirac equation in 
curved 
space should read
\begin{equation}\label{eq:Q}
	\hat{Q}=i\hat{\psi}^a\hat{e}_a^\mu
	\hat{\Pi}_\mu+im\hat{\psi}^5.
\end{equation}
Here, in analogy with \cref{mom}, we defined the covariant momentum as
\begin{equation}\label{momSpin}
\hat{\Pi}_\mu=\frac{1}{\sqrt{- {g}}}\hat{p}_\mu\sqrt{- {g}}
+\hat{\omega}_\mu^{ab}\hat{\sigma}_{ab}~,
\end{equation}
where
\begin{align}\label{sigma}
\hat{\sigma}_{ab}=\frac{i}{4}\left[\hat{\psi}_a,\hat{\psi}_b\right]~.
\end{align}	
The definition in \cref{momSpin} corresponds to a minimal coupling of the spinning particle to a 
gravitational background. 

Subsequently, to find $\hat{H}$, we exploit the fact that 
the supersymmetry algebra is closed, i.e.  
\begin{equation} 
	\left\{\hat{Q},\hat{Q}\right\}=2\hat{H}.
\end{equation}
After some tedious manipulations, one obtains  
\begin{equation}\label{eq:hamiltonianoperator}
	\hat{H}=\frac{-\hat{g}^{\mu\nu}\hat{\Pi}_\mu\hat{\Pi}_\nu+m^2}{2}
	+\frac{i}{2}\hat{g}^{\mu\nu}\hat{\Gamma}_{\mu\nu}^\varrho\hat{\Pi}_\varrho
	-\hat{R}_{ab}^{~~cd}\hat{\sigma}^{ab}\hat{\sigma}_{cd}
	~.
\end{equation}
This Hamiltonian bears some similarity to the scalar case of \cref{Ham}. The 
first and the second term in particular are of precisely the same form, with 
the covariant 
momentum of \cref{mom} modified as in \cref{momSpin} 
to include the spin dependent term. The last term instead contains the 
coupling of the spinning degrees of freedom to the curvature $R^{\mu\nu\rho\sigma}$, in 
analogy with the gauge theory case where one has the coupling between 
$\sigma_{\mu\nu}$ and the field strength tensor $F^{\mu\nu}$ 
\cite{Bonocore:2020xuj}.

After having identified the correct quantum operators $\hat Q$ and $\hat H$ 
that correspond to the 
classical first class constraints of the worldline model, we can finally 
consider the 
dressed propagator, which is defined as the matrix element of the ratio of these operators.
Specifically, we define 
 the dressed propagator $\mathcal{P}_1(p_f,x_i,\eta)$ for a spin $1/2$ 
particle in the background of a gravitational field as
\begin{equation}\label{dressed}
\mathcal{P}_1(p_f,x_i,\eta,\eta^5)
=\frac{1}{\braket{\bar{\chi}_f|\chi_i}\braket{\bar{\chi}^5_f|\chi^5_i}
	\braket{p_f|x_i}}
\bra{f}\frac{\hat{Q}}{-2\hat{H}+i\epsilon}\ket{i}~,
\end{equation}
where the initial and final states are
\begin{equation}\label{tensor}
\bra{f}=\bra{p_f}\otimes\bra{\bar{\chi}_f,\bar{\chi}^5_f}\text{~and~} 
\ket{i}=\ket{x_i}\otimes\ket{\chi_i,\chi_i^5}~.
\end{equation}
Here, fermions fulfill the twisted boundary conditions 
\begin{align}
\bar{\chi}_f+\chi_i=\sqrt{2}\eta~, \qquad 
\bar{\chi}_f^5+\chi_i^5=\sqrt{2}\eta^5~,
\label{bc}
\end{align}
where $\eta^\mu$ and $\eta^5$ are a 
set of constant Grassmann variables that eventually generate the spinor 
structure of the propagator. More specifically,  
as in the flat spacetime case where $g_{\mu\nu}=\eta_{\mu\nu}$, we need to 
map\footnote{In the literature this is called symbol 
	map\cite{Ahmadiniaz:2020wlm}.} 
\begin{align}\label{symbolmap}
&\eta^5 \mapsto \frac{i\gamma^5}{\sqrt{2}},\\
& \eta^{\mu_1}...\eta^{\mu_n}\mapsto\frac{1}{n! 
	2^{\frac{n}{2}}}\epsilon_{j_1,...,j_n}\gamma^{\mu_{j_1}}...\gamma^{\mu_{j_n}}~,
\end{align}
leaving us with standard gamma matrices.
In the next section we work out a path integral representation for the dressed 
propagator.

\subsection{Setting up the path integral}
\label{sec:setup}

A path integral representation for the dressed propagator
in \cref{dressed} can be obtained by using the curved space action corresponding to \cref{spin12Act} with the appropriate Hamiltonian 
\cref{eq:hamiltonianoperator} and integrating over all dynamical fields $x^\mu(t),p^\mu(t),\psi^\mu(t),e(t),\chi(t)$. In order to see the equivalence to the operator definition, it is customary to gauge fix the multiplet via $(e(t),\chi(t))=(T,\theta)$, so that the 
path 
integrations over $e(t)$ and $\chi(t)$ reduce to integrations over 
the 
proper time $T$ and the fermionic ``supertime'' $\theta$, respectively.
In this way we get
\begin{equation}\label{eq:dress}
\mathcal{P}_1(p_f,x_i,\eta,\eta^5)=e^{-ip_fx_i-\bar{\chi}_f\chi_i+\bar{\chi}^5_f\chi^5_i}\frac{1}{2}\int
d\theta\int\limits_{0}^{\infty}dT\, \mathcal{M}(T,\theta)~,
\end{equation}
where the integrations over the the remaining fields give the following matrix element
\begin{align}\label{eq:M}
\mathcal{M}(T,\theta)=
\bra{f}\exp\left\{-i\theta 
\hat{Q}-i(\hat{H}-i\epsilon)T \right\}\ket{i}~.
\end{align}
By comparing \cref{eq:M} with \cref{dressed}, we see that, in analogy with the gauge theory case of 
\cite{Bonocore:2020xuj}, the role of the $T$ is to exponentiate the numerator 
$\hat H$ 
while the 
Grassmann nature of $\theta$ allows the exponentiation of the denominator $\hat 
Q$.

We aim at a rigorous path integral representation for this matrix element. 
Thanks to \cref{tensor}, 
the representation for the bosonic part of \cref{eq:M} is the same as in the 
scalar case discussed in \cite{Bonocore:2021qxh}. In $px$-ordering and time slicing it reads 
\begin{align}\label{bosPath}
\bra{p_f}\exp\left\{i\hat{A}(\hat x,\hat p)T\right\}\ket{x_i}
=e^{ip_f\cdot 
x_i}\!\int\!\mathcal{D}x\mathcal{D}p 
\exp\left\{ip_fx(T)+i\!\int_{0}^{T}\!\!dt 
\left(-p\cdot\dot{x}+A_{\text{px}}(x,p)\right)\right\}~.
\end{align}
For most applications, $A_{\text{px}}(x,p)$ yields a trivial integration 
over $p$, and 
thus for 
perturbative calculations the only relevant propagator is the one for the $x$ 
field, 
which reads
\begin{align}\label{bosProp}
\contraction{}{x}{^a(t)}{x}x^\mu(t)x^\nu(t')
=-\eta^{\mu\nu}\min(t,t')~.
\end{align}
However,  it is sometimes necessary to consider derivatives of \cref{bosProp}, 
which at equal time contain $\delta(0)$ and are thus divergent.  
These divergences are canceled by auxiliary ghosts fields, which  
are introduced by exponentiating the factors $\sqrt{-g(x)}$ in the integration 
measure, which in turn emerge after integrating over $p$. The role of ghost 
fields 
in the construction of the GWL have been discussed in great detail in 
\cite{Bonocore:2021qxh} and will not be repeated here.
 We can safely ignore ghosts in the following. 

The construction of fermionic path integrals, on the other hand, relies on 
coherent states, which in turn require the notion of creation and annihilation 
operators. Since the supersymmetric $\mathcal{N}=1$ worldline model only 
contains a  single real (i.e. Majorana) Grassmann field $\hat\psi_a$, there is 
no 
natural way 
to 
construct creation 
and annihilation operators. A well-known possibility consists of doubling the 
fermionic 
degrees of freedom by introducing unphysical fields that eventually must be 
integrated out. Although this procedure has been thoroughly discussed   
in the literature \cite{Bastianelli:2006rx}, one has to carefully verify that 
the method works with the ordering prescription and the somewhat unusual 
boundary conditions we are 
adopting. We discuss this 
in the next section.

\subsubsection{Fermionic path integral}
We first rename
$\hat{\psi}^a\rightarrow \hat{\psi}^a_1$ and $\hat{\psi}^5\rightarrow 
\hat{\psi}^5_1$, and introduce free, unphysical fermionic operators 
$\hat{\psi}_2^a$ and $\hat{\psi}^5_2$. The linear combinations 
\begin{align}\label{fermdoubl}
\hat{\psi}_a&=\frac{1}{\sqrt{2}}\left(\hat{\psi}_1^a+i\hat{\psi}_2^a\right),
\qquad 
\hat{\psi}^5=\frac{1}{\sqrt{2}}\left(\hat{\psi}_1^5+i\hat{\psi}_2^5\right),
\end{align}
provide then the desired algebra, since they fulfill
\begin{align}
\left\{\hat{\psi}^a,\hat{\psi}^{b\dagger}\right\}&=\eta^{ab}\mathds{1},
\qquad 
\left\{\hat{\psi}^5,\hat{\psi}^{5\dagger}\right\}=-\mathds{1}.
\end{align}
This enables the construction of a Fock space, where the vacuum state 
$\ket{\Omega}$ is
annihilated by $\hat{\psi}^a$ and $\hat{\psi}^5$, while 
the excited states fulfill
\begin{align}
\ket{a}&\equiv\hat{\psi}^{\dagger}_a\ket{\Omega},\qquad 
\bra{a}\equiv\bra{\Omega}\hat{\psi}^a,\qquad 
\braket{a|b}=\delta^a_b~,
\end{align}
with analogous construction for $\psi^5$. A coherent state is then given by 
\begin{equation} 
\ket{\chi}=\exp\left\{\hat{\psi}^\dagger_a\chi^a\right\}\ket{\Omega},
\end{equation}
with normalization 
\begin{equation} 
\braket{\bar{\chi}|\chi}=e^{\bar{\chi}_a\chi^a}~.
\end{equation}

Equipped with coherent states, we now aim at writing a path integral 
representation for matrix 
elements of the form 
\begin{equation} 
\bra{\bar{\chi}_f}\exp\left\{\hat{A}(\hat{\psi},\hat{\psi}^\dagger)T\right\}\ket{\chi_i}~.
\end{equation}
We first write the identity operator in terms of coherent states as 
\begin{equation} 
\mathds{1}=\int d^4\bar{\chi}d^4\chi 
\ket{\chi}e^{-\bar{\chi}\chi}\bra{\bar{\chi}}~,
\end{equation}
with the product measure defined as
\begin{equation} 
d^n\bar{\chi}d^n\chi=\prod_{a=1}^{n}d\bar{\chi}_a d\chi^a~.
\end{equation}
As usual in time slicing, we split the length of the time interval $T$ into $N$ 
pieces $\tau$ and insert 
$N-1$ coherent state completeness relations $\mathds{1}_i$, defined as 
\begin{equation} 
\mathds{1}_i=\int d^4\bar{\chi}_id^4\chi_{i} 
\ket{\chi_i}e^{-\bar{\chi}_i\chi_{i}}\bra{\bar{\chi}_i}.
\end{equation}
Setting $\bar{\chi}_f=\bar{\chi}_N$ and 
$\chi_i=\chi_0$ we get
\begin{align}
\bra{\bar{\chi}_f}\exp\left\{\hat{A}(\hat{\psi},
\hat{\psi}^\dagger)T\right\}\ket{\chi_i}
=&\nonumber\bra{\bar{\chi}_N}e^{\hat{A}\tau}\mathds{1}_{N-1}\dots 
\mathds{1}_1e^{\hat{A}\tau}\ket{\chi_0}\\
=&\prod_{j=1}^{N-1}\left(d^4\bar{\chi}_jd^4\chi_{j}e^{-\bar{\chi}_j\chi_{j}}\right)\prod_{n=1}^{N}\bra{\bar{\chi}_n}e^{\tau\hat{A}}\ket{\chi_{n-1}}.
\end{align}
We then expand in $\tau$ the matrix elements between states at $n$ and $n-1$
\begin{equation} 
\bra{\bar{\chi}_n}\!e^{\hat{A}\tau}\!\ket{\chi_{n\!-\!1}}= 
\bra{\bar{\chi}_n}\!1+\hat{A}_{\text{px}}\tau+\dots\!\ket{\chi_{n\!-\!1}}= 
\exp\Big\{\! A_{\text{px}}(\chi_{n-1},\bar{\chi}_n) 
\tau+\mathcal{O}(\tau^2)\Big\}\braket{\bar{\chi}_n|\chi_{n\!-\!1}}.	
\end{equation}
Here we defined $px$-ordering for the Grassmann fields via \begin{equation} 
\left(\hat{\psi}^a\hat{\psi}^\dagger_b\right)_{\text{px}}
=\delta^a_b-\hat{\psi}^\dagger_b\hat{\psi}^a,
\end{equation}
in analogy to the bosonic operators and the function $A_{\text{px}}(\chi_{n-1},\bar{\chi}_n)$ is obtained by replacing in the ordered operator all operators with their eigenvalues on the coherent states. 
With the shorthand $A_n=A_{\text{px}}(\chi_{n-1},\bar{\chi}_n)$, we get
\begin{align}
&\bra{\bar{\chi}_f}\exp\left\{\hat{A}(\hat{\psi},\hat{\psi}^\dagger)T\right\}\ket{\chi_i}\nonumber\\
=&\prod_{j=1}^{N-1}\left(d^4\bar{\chi}_jd^4\chi_{j}\right)\exp\left\{\!\bar{\chi}_N\chi_{N-1}-\!\sum_{n=1}^{N-1}\bar{\chi}_n(\chi_{n}-\chi_{n-1})+\!\sum_{n=1}^{N}\tau
A_n\right\},
\end{align}
which in the continuum limit becomes
\begin{equation}\label{Mferm}
\bra{\bar{\chi}_f}\exp\left\{\hat{A}(\hat{\psi},\hat{\psi}^\dagger)T\right\}\ket{\chi_i}
=\int\limits_{\chi(0)=\chi_i}^{\bar{\chi}(T)=\bar{\chi}_f} 
\mathcal{D\bar{\chi}}\mathcal{D}\chi 
\exp\left\{\bar{\chi}_f\chi(T)+\int_{0}^{T}dt 
\left(-\bar{\chi}\dot{\chi}+A\right)\right\}.
\end{equation}
At this point it seems we achieved our goal, with \cref{Mferm} being the 
fermionic equivalent of \cref{bosPath}. However, the path integral in 
\cref{Mferm}
depends on the nonphysical imaginary part $\psi_2$. In principle this is not a 
problem, since by construction the final result must 
 depend only on the physical operator $\hat \psi_{1}$. The representation in 
 \cref{Mferm} is in fact what is typically used for computations in the literature 
 (see e.g. \cite{Bastianelli:2006rx}). However, in order to make manifest that 
 these 
 spurious degrees of 
 freedom do not alter physical predictions,  
 it would be desirable to
have a path integral representation that explicitly depends only on the 
physical fields. Since to the best of our knowledge this has not been discussed 
in the literature, and because we are adopting a rather unconventional ordering 
prescription, we present a first-principle construction for such 
representation   
in 
\cref{sec:ferm}.  
 The final result is that one can integrate out $\psi_2$ such that \cref{Mferm} 
 becomes 
\begin{equation}\label{fermionPath}
\bra{\bar{\chi}_f}\exp\left\{\hat{A}(\hat{\psi},\hat{\psi}^\dagger)T\right\}\ket{\chi_i}
=\int\limits_{\psi_1(0)+\psi_1(T)=2\eta}\!\!\!\!
 \mathcal{D}\psi_1 \exp\left\{\bar{\chi}_f\chi_i+\int_{0}^{T}dt 
\left(-\frac{1}{2}\psi_1\dot{\psi}_1+A(\psi_1)\right)\right\}~.
\end{equation} 
with the corresponding propagator
\begin{equation}\label{fermionProp}
\contraction{}{\psi}{_{1}^a(t)}{\psi}\psi_{1}^a(t)\psi_{1}^b(t')
=\frac{1}{2}\eta^{ab}(\theta(t-t')-\theta(t'-t))~.
\end{equation}
Few comments are in order here. First, the boundary condition of the path integral follows from \cref{bc} and the 
fact that  
$\psi_1(0)=\sqrt{2}\Re{(\chi_i)}$ and $\psi_1(T)=\sqrt{2}\Re{(\bar{\chi}_f)}$. 
Secondly, the continuum propagator for $\psi_{1}^a$ is 
in agreement with the intuitive picture of inverting the differential operator 
in the kinetic term. Finally, as mentioned, the operator $\hat{A}(\psi,\psi^\dagger)$ needs to be related to the function $A(\psi_1)$ via the chosen ordering prescription. We discuss this aspect in the next section.

\subsubsection{An ordering prescription for $\hat{Q}$ and $\hat{H}$}
 
The results of \cref{bosPath}
and  \cref{fermionPath}  yields a path integral 
representation for  
\cref{eq:M}.
However, there is still a crucial ingredient missing, namely that we need to 
map the operators $\hat Q$ of \cref{eq:Q} and $\hat H$ of 
\cref{eq:hamiltonianoperator} to classical expressions, in analogy to 
the scalar case of \cref{Ham}. In particular, on top of ordering the operators 
$\hat p$ and $\hat x$, one has to assign a classical phase space function to products of the operator $\hat\psi^a$, 
 contained both in $\hat Q$ and 
$\hat H$ through the operator $\hat{\sigma}^{ab}$ defined in \cref{sigma}. Once again, fermion doubling turns out to be 
a powerful tool.

First, using 
\cref{eq:anticommutatorpsi5} we $px$-order the bosonic operators in   
\cref{eq:hamiltonianoperator} thereby isolating the scalar contribution
\begin{align}\label{hamNew}
\hat{H}_{\text{px}}=\hat{H}_{\text{px}}^{\text{scalar}}-\hat{p}_\mu 
\hat{g}^{\mu\nu}\hat{\omega}_\nu+\frac{1}{2}\left[i\nabla_\nu\hat{\omega}^\nu-\hat{\omega}_\mu\hat{\omega}_\nu
 \hat{g}^{\mu\nu}\right]-\hat{R}_{ab}^{~~cd}\hat{\sigma}^{ab}\hat{\sigma}_{cd},
\end{align}
where $\hat{\omega}_\mu=\hat{\omega}_\mu^{ab}\hat{\sigma}_{ab}$ and the scalar 
Hamiltonian reads \cite{Bonocore:2021qxh}
\begin{align}
\hat{H}_{\text{px}}^{\text{scalar}}
=\frac{1}{2}\left(-\hat{p}_\mu 
\hat{p}_\nu\hat{g}^{\mu\nu}+m^2
+i\hat{p}_\mu\left(\partial_\nu\hat{g}^{\mu\nu}
+\hat{g}^{\mu\nu}\hat{\Gamma}^\alpha_{\nu\alpha}\right)\right).
\end{align} 
For the Grassmann operators, we replace $\hat\psi_1$ with $\hat\psi$ and $\hat\psi^\dagger$ by applying \cref{fermdoubl} and move all $\hat\psi^\dagger$'s to the left. After ordering, the operators can be evaluated on the coherent states at their time slice to obtain the phase space function $A_{\text{px}}(\bar\chi,\chi)$ depending on $\bar\chi$ and $\chi$. Integrating out the spurious imaginary part leaves just a dependence $A(\psi_1)$ on the original real Grassmann field $\psi_1$ (see \cref{fermionPath}). This lengthy procedure can be conveniently bypassed by means of the following Wick-like theorem for ordered products of Majorana 
operators, which reads
\begin{align}
\braket{\prod_{i=1}^{n}\hat{\psi}_1^{a_i}}=\sum\limits_{\begin{subarray}
	{c}\text{possible}\\\text{contractions}
	\end{subarray}}\left[\text{contractions}\times\prod\limits_{i\text{ not 
		contracted}}\braket{\hat{\psi}_1^{a_i}}\right]~.
\label{wick}
\end{align}
Here, we defined the expectation value via
\begin{equation}
\psi_1^a\equiv\braket{\hat{\psi}^a_1}=\braket{\frac{\hat{\psi}^a+\hat{\psi}^{a\dagger}}{\sqrt{2}}}\equiv\frac{1}{\braket{\bar{\chi}|\chi}}\bra{\bar{\chi}}\frac{\hat{\psi}^a+\hat{\psi}^{a\dagger}}{\sqrt{2}}\ket{\chi}=\frac{\bar{\chi}^a+\chi^{a}}{\sqrt{2}},
\end{equation}
and the contraction of two Majorana operators 
as 
\begin{equation}\label{wick-contr}
\contraction{}{\hat{\psi}}{_1^a }{\hat{\psi}}
\hat{\psi}_1^a  
\hat{\psi}_1^b=\frac{1}{2}\left\{\hat{\psi}_1^a,\hat{\psi}_1^b\right\}=\frac{\eta^{ab}}{2}~.
\end{equation}
We also adopted the convention
\begin{equation}
\contraction{}{\hat{\psi}}{^a_1 \hat{\psi}_1^b}{\hat{\psi}}\hat{\psi}^a_1 
\hat{\psi}_1^b 
\hat{\psi}_1^c=-\contraction{}{\hat{\psi}}{^a_1}{\hat{\psi}}\hat{\psi}^a_1 
\hat{\psi}_1^c \hat{\psi}_1^b~.
\end{equation}

The outcome of \cref{wick} is a correspondence between products of 
operators 
$\hat\psi_1$ and the function of $\psi_1$ to be used in the path integral. In particular, the operators appearing in the Hamiltonian and supersymmetry get replaced as follows
\begin{align} 
\hat{\sigma}_{ab}\rightarrow& 
\frac{i}{2}\left(\braket{\psi_a}\braket{\psi_b}+\wick{\c{\psi_a} 
	\c{\psi_b}}\right)=\frac{i}{2}\left(\psi_a\psi_b+\frac{\eta_{ab}}{2}\right)~,\\
\hat{\sigma}_{ab}\hat{\sigma}_{cd}\rightarrow&\nonumber 
-\frac{1}{4}\Big(\psi_a\psi_b\psi_c\psi_d+\frac{1}{4}(\eta_{ab}\eta_{cd}+\eta_{ad}\eta_{cb}-\eta_{ac}\eta_{bd})\\
&+\frac{1}{2}\left(\eta_{ab}\psi_c\psi_d+\eta_{cd}\psi_a\psi_b-\eta_{ac}\psi_b\psi_d-\eta_{bd}\psi_a\psi_c+\eta_{bc}\psi_a\psi_d+\eta_{ad}\psi_b\psi_c\right)\Big).
\end{align}
We further notice that when inserting these terms in the Hamiltonian  
their contributions drastically simplify due to the following contractions
\begin{align}\label{contr1}
\hat{\omega}_\mu^{ab}\hat{\sigma}_{ab}&= \frac{i}{2}\omega^{ab}_\mu 
\psi_a\psi_b,\\
\label{contr2}
\hat{R}^{abcd}\hat{\sigma}_{ab}\hat{\sigma}_{cd}&= 
-\frac{1}{4}R^{abcd}\psi_a\psi_b\psi_c\psi_d+\frac{1}{8}R=\frac{1}{8}R~,\\
\hat{g}^{\mu\nu}\hat{\omega}_\mu^{ab}\hat{\omega}_\nu^{cd}
\hat{\sigma}_{ab}\hat{\sigma}_{cd} &= 
-\frac{1}{4}g^{\mu\nu}\omega^{ab}_\mu\omega^{cd}_\nu 
\psi_a\psi_b\psi_c\psi_d+\frac{1}{8}
g^{\mu\nu}\omega_\mu^{ab}\omega_\nu^{ab}~, \label{contr3}
\end{align}
where in \cref{contr1} we made use of the antisymmetry of the spin connection 
and	 in \cref{contr2} we used $R^{a[bcd]}=0$. 

In conclusion, by making use of \cref{contr1}, \cref{contr2}
 and \cref{contr3} in 
\cref{hamNew} and \cref{eq:Q}, the classical expressions for $H$ and $Q$ that 
we use in the path integral 
take the following form 
\begin{align}\label{Qpx}
	Q_{\text{px}}=&i\left(\psi^a  e_a^\mu(p_\mu+\frac{i}{2}\omega_\mu^{cd}\psi_c\psi_d)+m\psi_5-\psi^a i\partial_\mu(e_a^\mu \sqrt{-g})\frac{1}{\sqrt{-g}}+\frac{i}{2}e^\mu_c \omega_\mu^{cd}\psi^d \right)
\end{align}
and
\begin{align}
H_{\text{px}}=&\nonumber
H_{\text{px}}^{\text{scalar}}-\frac{i}{2}p_\mu g^{\mu\nu}\omega^{ab}_\nu 
\psi_a\psi_b\\&-\frac{1}{4}\nabla_\nu 
g^{\nu\sigma}\omega_\sigma^{ab}\psi_a\psi_b+\frac{1}{8}g^{\mu\nu}
\omega^{ab}_\mu\omega^{cd}_\nu 
\psi_a\psi_b\psi_c\psi_d-\frac{1}{16}g^{\mu\nu}\omega_\mu^{ab}\omega_\nu^{ab}-\frac{1}{8}R~.
\label{Hpx}
\end{align}

With \cref{bosPath}, \cref{fermionPath}, \cref{Qpx} and \cref{Hpx}
we have all ingredients to write a path integral representation for the 
matrix element of \cref{eq:M}, which reads
\begin{align}\label{path}
	\mathcal{M}(T,\theta)=&\nonumber\smashoperator{\int\limits_{x(0)
			=x_i}^{p(T)=p_f}}~\mathcal{D}x~\mathcal{D}
		p~\smashoperator{\int\limits_{\psi_1(0)+\psi_1(T)=2\eta}}
		\mathcal{D}\psi_1~\smashoperator{\int\limits^{\psi^5_1(0)+\psi^5_1(T)=2\eta^5}}
		\mathcal{D}\psi^5_1	 
	\exp\Bigg\{ip_fx(T)+\bar{\chi}_f\chi_i
	-\bar{\chi}_f^5
	\chi^5_i\\&+i\int_{0}^{T}dt
	 \left(-p\dot{x}+\frac{i}{2}\psi_1\dot{\psi_1}
	-\frac{i}{2}\psi_1^5\dot{\psi}_1^5-H_{\text{px}}
	-\frac{\theta}{T}Q_{\text{px}}\right)\Bigg\}~,
\end{align} 
where the constraints $H_{\text{px}}$ and $Q_{\text{px}}$ refer to the 
px-ordered operators $\hat{H}_{\text{px}}$ and $\hat{Q}_{\text{px}}$ evaluated 
on the eigenstates of the phase space variables at time $t$.

\subsubsection{A trick with conserved charges and the asymptotic limit}

At this point, to perform the integration over $\psi^5$ and further simplify the expression in 
\cref{path},  we perform a trick that has been already exploited in the gauge 
theory case \cite{Bonocore:2020xuj}\footnote{See also 
\cite{Alexandrou:1998ia}.} and that becomes particularly useful when taking the 
asymptotic 
limit $T\to 
\infty$. We consider the integration over $\theta$ in \cref{eq:dress}
and \cref{eq:M} 
and we first simplify 
\begin{equation}\label{trick}
	\int d\theta\exp\left\{-i\theta\frac{1}{T}\int_{0}^{T} 
	Q_{\text{px}}(t)dt\right\}=-i\frac{1}{T}\int_{0}^{T} Q_{\text{px}}(t)dt.
\end{equation}
The integration over the phase space variables in \cref{path} thus corresponds 
to the expectation value $\braket{Q_{\text{px}}}$. However, $\hat{Q}$ is a 
conserved charge due to the supersymmetry of the worldline model and it is 
therefore constant. This can also be seen by observing that 
$\left[\hat{H},\hat{Q}\right]=0$. 
The above expression can therefore be 
simplified further into 
\begin{equation} 
	-i\frac{1}{T}\int_{0}^{T} Q_{\text{px}}(t)dt\rightarrow 
	-i\frac{1}{T}\int_{0}^{T} \braket{\hat Q_{\text{px}}(t)}dt=-i \braket{\hat 
	Q_{\text{px}}(t)}.
\end{equation}
Being a constant, $Q_{\text{px}}(t)$ can be evaluated at an arbitrary time. We 
choose $t=T$, to get
\begin{align}\label{pathNew}
	\mathcal{M}(T,\theta)=&\nonumber\braket{-i\theta \hat 
	Q_{\text{px}}(T)}\smashoperator{\int
	\limits_{x(0)=x_i}^{p(T)=p_f}}~\mathcal{D}x~\mathcal{D}
p~\smashoperator{\int\limits_{\psi(0)+\psi(T)=2\eta}}\mathcal{D}
\psi~\smashoperator{\int\limits^{\psi^5(0)+\psi^5(T)=2\eta^5}}\mathcal{D}\psi^5	
	\exp\Bigg\{ip_fx(T)\\&+\bar{\chi}_f\chi_i-\bar{\chi}_f^5\chi^5_i
	+i\int_{0}^{T}dt
	\left(-p\dot{x}+\frac{i}{2}\psi\dot{\psi}
	 	-\frac{i}{2}\psi^5\dot{\psi}^5-H_{\text{px}}\right)\Bigg\}.
\end{align}
where we dropped the subscript of $\psi_1$ for notational convenience.

 There 
are a number of advantages that follow from having factorized the expectation 
value of $\hat Q(T)$.
 The most obvious one is that the
integration over  $\psi^5$ can be done straightforwardly, since the 
dependence on $\psi^5$ is entirely in the free kinetic term, giving unity. Similarly, \cref{Hpx} depends quadratically on the momentum and thus 
the integral over $p^\mu$ is Gaussian and 
yields
\begin{align}\label{MT}
	\mathcal{M}(T,\theta)=&\nonumber\braket{-i\theta 
	Q_{\text{px}}(T)}
e^{ip_fx(T)+\bar{\chi}_f\chi_i-\bar{\chi}_f^5\chi_i^5-\frac{i}{2}m^2T}
	f(x^\mu,\psi,g_{\mu\nu},T)~.
\end{align}
where
\begin{align}
f(x^\mu,\psi,& g_{\mu\nu},T)=\nonumber\smashoperator{\int\limits_{x(0)=x_i}}\mathcal{D}
x~\smashoperator{\int\limits^{\psi(0)+\psi(T)=2\eta}}\mathcal{D}\psi 
\exp\Bigg\{i\int\limits_{0}^{T}dt \Bigg( 
\frac{i}{2}\psi\dot{\psi}-\frac{1}{2}\dot{x}^\mu\dot{x}^\nu 
g_{\mu\nu}+\frac{i}{2}\dot{x}^\mu 
g_{\mu\nu}g^{\tau\lambda}\Gamma^\nu_{\tau\lambda}+\\
&+\frac{i}{2}\dot x^\mu\omega_\mu^{ab}\psi_a
\psi_b+\frac{1}{4}g^{\mu\nu}\partial_\nu\omega_\mu^{ab}\psi_a\psi_b+\frac{1}{8}
\!\left(\!R+\frac{1}{2}g^{\mu\nu}\omega_\mu^{ab}\omega_\nu^{ab}
+g_{\mu\nu}g^{\alpha\beta}g^{\gamma\delta}\Gamma_{\alpha\beta}^\mu
\Gamma^\nu_{\gamma\delta}\right)\!\Bigg)\!\Bigg\}~.
\end{align}

However, to fully appreciate the factorization of $Q(T)$ above, we have to 
proceed as in the 
previous work on the GWL 
\cite{Laenen:2008gt, 
Bonocore:2020xuj, Bonocore:2021qxh} and LSZ truncate the dressed propagator.
In doing so, we parameterize the path integral over $x$ via
\begin{align}
x(t)=x_i+p_ft+\tilde{x}(t)~,
\label{xtilde}
\end{align}
 and eventually consider the on-shell limit $p_f^2\to m^2$. Specifically, 
 we consider the following chain of equalities
 \begin{align}
 &\bar 
 u(p_f)\frac{p^2_f\!-\!m^2\!+\!i\epsilon}{i(\slashed{p}_f\!+\!m)}	
 \mathcal{P}_1(p_f,x_i,\eta,\eta^5)\notag \\
 &=\bar 
 u(p_f)
 \int\limits_{0}^{\infty}dT\,\frac{d}{dT}\left(
 e^{\frac{i}{2}(p_f^2-m^2)T}\right) 	\int
 d\theta \theta \frac{	\braket{-iQ_{px}(T)}}{i(\slashed{p}_f\!+\!m)} 
 e^{-\frac{1}{2}\epsilon T}f(x_i,p_f,\eta,T) \notag \\
 &=\bar 
 u(p_f) \lim_{T\to \infty}
 \frac{	\braket{-iQ_{px}(T)}}{i(\slashed{p}_f\!+\!m)} 
 f(x_i,p_f,\eta,T)~,
 \end{align} 
 where we first integrated by parts, then we set the Feynman $\epsilon$ to zero 
 and only at the very end we took the on-shell limit. 
 In this way, we are left with the asymptotic boundary term i.e. the dressed 
 propagator  
 evaluated 
 at $T\to\infty$.

 The value of having the charge $\hat Q(t)$ evaluated at $t=T$ is now evident. 
 In fact, 
  what 
 seemed at first glance a very technical trick that allowed us to trivially 
 perform the integration over $\psi^5$, has in fact a clear
 physical 
 interpretation. By evaluating the charge $\hat Q$ asymptotically for 
 $T\to\infty$, we 
 are in fact 
 replacing the charge $Q$ in the presence of a gravitational background 
  defined in \cref{eq:Q} 
 with the 
 free charge in flat spacetime, i.e.
 \begin{align}
 \lim_{T\to\infty} \braket{\hat Q_{\text{px}}(T)}
 =\lim_{T\to\infty}(i\eta^a e_a^\mu(x(T))
  p_\mu +im\eta^5)=i\eta^\mu 
  p_\mu +im\eta^5,
 \end{align}
  Stated differently, the background  
 field does not contribute to the numerator $Q$ of the asymptotic propagator 
 (which 
 determines the structure of its spin indices) but only to the denominator $H$, 
 which 
 contains 
 the coupling between spin and curvature. We have thus extended the 
 gauge-theory result of \cite{Bonocore:2020xuj}, by showing that also in 
 gravitational theories there is a clear 
 connection between the worldline representation of an asymptotic dressed 
 propagator and
 the more studied worldline quantization on the circle (i.e. the one-loop 
 effective action), which by definition is constructed with denominators only 
 (see e.g. \cite{Bastianelli:2002fv}).

In conclusion, we are left with the following path-integral representation for 
the 
asymptotic 
propagator (i.e. the 
LSZ-truncated dressed propagator)
\begin{equation} 
	\frac{p^2\!-\!m^2\!+\!i\epsilon}{i(\slashed{p}\!+\!m)}\mathcal{P}_1(x_i,p_f,\eta)
	=\smashoperator{\int\limits_{x(0)=0}}\mathcal{D}x~\smashoperator{\int
		\limits^{\psi(0)+\psi(\infty)=2\eta}}\mathcal{D}\psi 
		\exp\Bigg\{i\int\limits_{0}^{\infty}dt\,
		e^{-\epsilon t}\, L_{\text{spin}}(t)\Bigg\}~,
\end{equation}
where
\begin{align}\label{Lexact}
	L_{\text{spin}}(t)=
	&L_{\text{scalar}}+\frac{i}{2}\psi\dot{\psi}+\frac{i}{2}(\dot{x}+p)^\mu
	\omega_\mu^{ab}\psi_a\psi_b+\frac{1}{4}g^{\mu\nu}\partial_\nu\omega_\mu^{ab}
	\psi_a\psi_b+\frac{1}{8}R+\frac{1}{16}g^{\mu\nu}\omega_{\mu}^{ab}\omega_{\nu}^{ab}~.
\end{align}

Here, $L_{\text{scalar}}$ is just the $x$ kinetic term together with the same counter terms in the second line as in \cite{Bonocore:2021qxh} (with eq. (4.13) inserted into eq. (4.19) in that paper) and reads
\begin{align}\label{Lscal}
L_{\text{scalar}}(t)=&\nonumber-\frac{1}{2}\dot{x}^\mu\dot{x}^\nu \eta_{\mu\nu}-\frac{1}{2}(p^{\mu}p^{\nu}+2p^\mu \dot{x}^\nu+\dot{x}^\mu\dot{x}^\nu)(g_{\mu\nu}-\eta_{\mu\nu})\\&+\frac{i}{2}(p+\dot{x})^\mu g_{\mu\nu}\Gamma^\nu_{\varrho\sigma}g^{\varrho \sigma}+\frac{1}{8}g_{\mu\nu}\Gamma^\nu_{\varrho\sigma}g^{\varrho \sigma}\Gamma^\mu_{\alpha\beta}g^{\alpha\beta}~.
\end{align} 
Note that although the last two terms in \cref{Lexact} are $\psi$-independent they are not included in $L_{\text{scalar}}$, since they are counterterms that originate from the ordering prescription for the Grassmann fields, and hence are absent in the scalar case.
We also dropped the notation from \cref{xtilde} and \cref{fermdoubl} and 
renamed the variables
$\tilde x \to x$, $p_f\to p$ and $\psi_{1}\to \psi$. Note that the dependence on $x_i$ is now hidden in the argument of the graviton field, which needs to be evaluated at $x_i+pt+x(t)$. The remaining integrations 
over $x$ and $\psi$ 
cannot be carried out exactly.
In the next section we will perform this task perturbatively in the soft 
expansion. The final result will be an exponential form for the LSZ-truncated 
dressed propagator that we call GWL.

\subsection{Generalized Wilson line}
\label{sec:GWL}

In the previous section we derived an exact result for the path integral 
representation of the LSZ-truncated Dirac 
propagator in the presence of a background gravitational field.
No assumption on this background has been made.
We now perform the first approximation and expand the metric around a 
Minkowski background by defining the graviton $h_{\mu\nu}$ via
\begin{align}
\label{graviton}
g_{\mu\nu}=\eta_{\mu\nu}+\kappa h_{\mu\nu}~,
\end{align}
where $\kappa^2=32\pi G$, with $G$ the Newton constant. 
Correspondingly, the Lagrangian of \cref{Lexact} expanded to second order in 
$\kappa$ reads 
\begin{align}
L_{\text{spin}}(t)=&L_{\text{scalar}}(t)
+\nonumber\frac{i}{2}\psi\dot{\psi}+\frac{i\kappa}{2}(\dot{x}^\mu+p^\mu)\partial^{b}h^{a}_\mu\psi_a\psi_b\\
&+\nonumber 
\frac{i\kappa^2}{2}(\dot{x}^\mu+p^\mu)\left(\frac{1}{2}\left(h^{a\lambda}\partial_\lambda
h_\mu^{~b}+h^{b\lambda}\partial^{a} 
h_{\mu\lambda}\right)\psi_a\psi_b+\frac{1}{4}h^{b}_{~\lambda}\partial_\mu 
h^{a\lambda}\psi_a\psi_b\right)\\
&+\nonumber \nonumber\frac{\kappa^2}{4}\Bigg(\partial^{b}\partial_\mu h^{\mu 
	a}-h^{\mu\nu}\partial_\nu 
\partial^{b}h^{a}_{~\mu}+\frac{1}{4}h^{b}_{~\mu}\square h^{\mu a}
\\&\nonumber+\frac{1}{2}\left(h^{a\lambda}\partial_\mu\partial_\lambda 
h_\mu^{~b}+\partial_\mu 
h^{b\varrho}\partial^{a}h_{\mu\varrho}+h^{b\varrho}\partial_\mu 
\partial^{a}h_{\mu\varrho}\right)\Bigg)\psi_a\psi_b+\\
&\nonumber+\frac{\kappa}{8}\left(\partial_\mu\partial_\nu h^{\mu\nu}-\square 
h\right)+\frac{\kappa^2}{8}\Big(h^{\mu\nu}\square 
h_{\mu\nu}+h^{\mu\nu}\partial_\mu\partial_\nu h-\left(\partial_\mu 
h^{\mu\alpha}-\frac{1}{2}\partial^\alpha h\right)^2\\
&+\frac{3}{4}\partial_\alpha h_{\mu\nu}\partial^\alpha 
h^{\mu\nu}-h^{\mu\nu}\partial_\mu\partial_\alpha 
h^\alpha_\nu-\frac{1}{2}\partial_\mu h_{\mu\alpha}\partial^\alpha 
h_{\mu\nu}+\frac{1}{2}\partial^{[\alpha}h^{\beta]\mu}\partial^{[\alpha}
h^{\beta]}_\mu\Big)~,
\label{Lspin}
\end{align}
where the square bracket in the last term denote index antisymmetrization and
\begin{align}
L_{\text{scalar}}(t)&=
\nonumber -\frac{1}{2}\dot{x}^\mu\dot{x}^\nu
\eta_{\mu\nu}-\frac{\kappa}{2}h_{\mu\nu}(\dot{x}^\mu\dot{x}^\nu+p^\mu 
p^\nu+2\dot{x}^\mu 
p^\nu)\\&+\frac{i\kappa}{2}(\dot{x}^\mu+p^\mu)\left(\partial_\nu 
h_\mu^{~\nu}-\frac{1}{2}\partial_\mu 
h\right)+\frac{i\kappa^2}{2}(\dot{x}^\mu+p^\mu)\left(\frac{1}{4}\partial_\mu
(h^{\varrho\sigma}h_{\varrho\sigma})-h^{\nu\sigma}\partial_\sigma 
h_{\mu\nu}\right)\nonumber\\
&+\frac{\kappa^2}{8}\left(\partial_\alpha h^{\alpha\mu}-\frac{1}{2}\partial^\mu 
h\right)^2
\end{align}
Once again, we explicitly isolated the scalar contribution, previously computed 
in \cite{Bonocore:2021qxh}.

The second approximation, which is key in deriving the exponential form of 
the GWL, 
is to perform the soft expansion, defined by the limit $p\gg k$, with $k$ being 
the 
momentum of a soft emission dressing the fermion propagator. In the gauge 
theory case discussed in \cite{Laenen:2008gt, Bonocore:2020xuj} this procedure 
is straightforward, since one can  
solve the path integral order by 
order in the soft expansion, but to 
all-orders in the coupling constant. This must be clearly revisited 
in gravity since, unlike the gauge theory case, the background gravitational 
field is 
intrinsically defined perturbatively via \cref{graviton}. Apart from this 
aspect, the gauge-gravity 
parallel holds. 
We thus rescale $p\rightarrow \lambda p$, $t\rightarrow t/\lambda$, and 
$\kappa\rightarrow \kappa/\lambda$, where $\lambda$ is a bookkeeping parameter (eventually set to one) that
controls the soft expansion in the limit $\lambda \to \infty$. Borrowing a terminology used in  
\cite{Laenen:2008gt}, we denote the leading term as eikonal (E), the subleading 
as next-to-eikonal (NE), and so forth. At NE level, the Lagrangian then 
becomes\footnote{We omit $+\frac{i\kappa}{2\lambda}\dot x^\mu\left(\partial_\nu 
h_\mu^{~~\nu}-\frac{1}{2}\partial_\mu 
h+\partial^{[b}h^{a]}_\mu\psi_a\psi_b\right)$ in \cref{Lexpanded} since the appearance of $\dot{x}$ makes it subleading.} 
\begin{align}
L_{\text{spin}}(t)=&\nonumber\frac{i}{2}\psi\dot{\psi}-\frac{\lambda}{2}\dot{x}^\mu\dot{x}^\nu
 \eta_{\mu\nu}-\frac{\kappa}{2}h_{\mu\nu}(\dot{x}^\mu\dot{x}^\nu+p^\mu 
p^\nu+2\dot{x}^\mu p^\nu)+\\
&\nonumber+\frac{i\kappa}{2\lambda}p^\mu\left(\partial_\nu 
h_\mu^{~~\nu}-\frac{1}{2}\partial_\mu 
h+\partial^{b}h^{a}_\mu\psi_a\psi_b\right)+\mathcal{O}(\lambda^{-2})~\\
=& 
L_{\text{scalar}}(t)+\frac{i}{2}\psi\dot{\psi}+\frac{i\kappa}{2\lambda}p^\mu\partial^{b}h^{a}_\mu\psi_a\psi_b.
\label{Lexpanded}
\end{align}
Note that only the first line of \cref{Lspin} contributes to \cref{Lexpanded}, since the other terms 
are ${\cal O}(\lambda^{-2})$ hence 
sub-subleading.
Recalling that we substituted $x(t)\to x_i+pt+x(t)$ in the path integral representation of the asymptotic propagator,
Feynman rules for the worldline fields $x$ and $\psi$ can then be generated by 
expanding 
$h^{\mu\nu}(x_i+pt+x(t))$ in powers of $x(t)$ around $x=0$. Crucially, as evident in the kinetic terms of \cref{Lexpanded}, the
two-point correlator of $x$ and its time derivatives scale like $1/\lambda$ while the
fermionic correlator scales like $\lambda^0$. Since the fermionic vertex is of order $1/\lambda$, only a finite number of 
diagrams are necessary at any given order in $1/\lambda$. We do not 
repeat the calculation here for the purely bosonic diagrams, which is identical 
to the one presented in 
\cite{Bonocore:2021qxh}. 
 
Spin-dependent diagrams instead follow from one term only, 
namely
\begin{equation}\label{spinvertex}
	\frac{i\kappa}{2\lambda} p^\mu \partial^b h^{a}_{~\mu}\psi_a\psi_b~,
\end{equation}
which generates the following tower of vertices
\begin{align}
\partial^{b}h^{a}_\mu(x_i+pt)\psi_a\psi_b~,\\
x^{\alpha_1}\partial_{\alpha_1}\partial^{b}h^{a}_\mu(x_i+pt)\psi_a\psi_b~,\\
\frac{1}{2}x^{\alpha_1}x^{\alpha_2}\partial_{\alpha_1}\partial_{\alpha_2}\partial^{b}h^{a}_\mu(x_i+pt)\psi_a\psi_b~,
\end{align}
and so forth. 
At NE order no $x$-propagator is required, which means that the path integral 
over $\psi$ is decoupled from the one over $x$. Therefore, to carry out the 
$\psi$ integration it is enough to 
 consider the boundary condition $\psi(0)+\psi(\infty)=2\eta$ and expand
 around the constants $\eta_a$ via
\begin{equation} 
	\psi_a(t)=\eta_a+\tilde{\psi}_a(t)~,
\end{equation}
where the fluctuations now obey the boundary condition
\begin{align}
\tilde{\psi}_a(0)+\lim\limits_{T\rightarrow\infty}\tilde{\psi}_a(T)=0~.
\end{align}
Inserting this expansion into \cref{spinvertex} yields 
\begin{equation}\label{psitilde}
	\frac{i\kappa}{2\lambda} p^\mu \partial^b h^{a}_{~\mu}\left(\eta_a\eta_b+\eta_a\tilde{\psi}_b+\tilde{\psi}_a\eta_b+\tilde{\psi}_a\tilde{\psi}_b\right).
\end{equation}
Notably, $\psi$-fields in \cref{psitilde} have to be contracted at equal 
time, resulting in \\${\braket{\psi_a(t)\psi_b(t)}=\eta_a \eta_b}$. However, thanks to \cref{fermionProp}, the equal-time propagator for 
$\psi$ vanishes. Hence, only the first term of \cref{psitilde} contributes to 
the path integral.
We finally employ the symbol map of \cref{symbolmap}, which gives
\begin{equation} 
	\eta_a\eta_b\mapsto \frac{1}{4}\left[\gamma_a,\gamma_b\right]=-i\sigma_{ab}.
\end{equation}

In conclusion, the contribution from \cref{spinvertex} to the GWL
is given at NE by a single spin-dependent vertex, which modifies the NE 
one-graviton emission. It reads
\begin{equation}\label{spinTerm}
	\frac{i\kappa}{2\lambda}\int\limits_{0}^{\infty}dt ~p^\mu \partial^b 
	h^{a}_{~\mu}\sigma_{ab}~,
\end{equation}
or, in momentum space,
\begin{equation}\label{spinTerm2}
	\int 
	\frac{d^4k}{(2\pi)^4}\tilde{h}_{\mu\nu}\left[\frac{i\kappa}{2}\frac{p^\mu 
	k_\varrho}{kp}\sigma^{\varrho \nu}\right]~.
\end{equation}
 Therefore,
combining the vertex of \cref{spinTerm} 
with the scalar vertices derived in \cite{Bonocore:2021qxh}, we obtain the 
following representation for 
the spin 1/2 GWL in curved space
\begin{align}\label{eq:GWL_spin}
\widetilde W_p(0,&\infty ;x_i)=\exp\Bigg\{
\frac{i\kappa}{2}\!\int\limits_0^{\infty}\! dt\,\left[\!-\! p_{\mu}p_{\nu}
\!+\!ip_{\nu}\partial_{\mu}
\!-\!\frac{i}{2}\eta_{\mu\nu}p^{\alpha}\partial_{\alpha}
\!+\!\frac{i}{2}tp_{\mu}p_{\nu}\partial^2\!+\!\sigma^{\nu\sigma}p^\mu\partial_\sigma
 \right]h^{\mu\nu}(x_i\!+\!pt)
\notag \\
&\! + \frac{i\kappa^2}{2}\int\limits_0^{\infty}\! dt  \!\int\limits_0^{\infty}\! ds\,
\Bigg[
\frac{p^{\mu}p^{\nu}p^{\rho}p^{\sigma}}{4}\min(t,s)\,\partial_{\alpha}
h_{\mu\nu}(x_i+pt)
\partial^{\alpha}
h_{\rho\sigma}(x_i+ps)+
\notag\\&
+p^{\mu}p^{\nu}p^{\rho}\,\theta(t\!-\!s)\,
h_{\rho\sigma}(x_i\!+\!ps)
\partial_{\sigma}
h_{\mu\nu}(x_i\!+\!pt)
\!+\!p^{\nu}p^{\sigma}\,\delta(t\!-\!s)\,
h^{\mu}_{\;\,\sigma}(x_i\!+\!ps)
h_{\mu\nu}(x_i\!+\!pt)
\Bigg]\Bigg\},	
\end{align} 
where the arguments $0$ and $\infty$ refer to the boundaries of the time integrations, which have been made explicit for upcoming generalizations.
Eq.~(\ref{eq:GWL_spin}) is one of the main results of this paper. A few comments 
are 
in order. 

The spin dependent term in \cref{spinTerm} and \cref{spinTerm2}, which appears 
in the first line of 
\cref{eq:GWL_spin}, is the only modification w.r.t. the 
scalar GWL. It
represents the universal soft factor which arises in the factorization of 
a single soft emission in a 
scattering amplitude, in agreement with single-emission next-to-soft theorems 
\cite{Cachazo:2014fwa}. Note, however, that the result 
of \cref{eq:GWL_spin} is more general, since it involves correlations between two 
gravitons. It shows in fact that no 
	spin-dependent modification 
	is necessary 
	for two-graviton 
emissions. This aspect is analogous to the 
gauge 
theory case, where the spin dependent contribution in the double-emission is 
proportional to the commutator $\sigma_{\mu\nu}[A^\mu,A^\nu]$ and thus vanishes 
in the abelian 
limit. Therefore, the 
absence of spin dependent terms in two-graviton emissions is due to the 
abelian nature of the graviton field. Note however that at sub-subleading 
(i.e. NNE) level correlations among 
three gravitons 
need to be taken into account. 

Secondly, we note the presence of an explicit dependence of the GWL on $x_i$, 
which was introduced as an integration constant while solving the free equations of 
motion in flat space. Its effect is to shift the starting point of the particle 
trajectory away from the origin and is due to 
finite-size effects of the hard interaction. 
This NE effect captures (part of) the orbital contribution of the 
soft theorem for gravitons and in the case of classical scattering becomes the impact parameter. 
Indeed, as discussed in \cite{White:2011yy, Bonocore:2021qxh}, the
presence of $x_i\neq 0$ mixes with derivatives of the hard interaction to 
generate the full orbital angular momentum. 

The soft GWL in \cref{eq:GWL_spin} manifestly demonstrates
the exponentiation of subleading soft gravitons dressing the external fermionic state of a scattering amplitude. 
In the next section we discuss
how \cref{eq:GWL_spin} 
has to be modified in the
classical limit. As we shall see, the parallel with the
soft exponentiation discussed in this section will turn out to be a powerful tool in this derivation. 

\section{Classical scattering and GWLs} \label{sec_ampltidues_and_classical}

As shown in \cite{Bonocore:2021qxh} for the case of non-spinning objects in the 
conservative sector, classical observables in the Post Minkowskian (PM) 
expansion (i.e. the expansion in $G$) can be computed via vacuum expectation values of classical GWLs.
We now extend this method to spin $1/2$ particles, thus allowing 
the inclusion of perturbative corrections which are linear the classical spin variable $S$.
 In doing so, we clarify some aspects concerning the construction of 
classical GWLs originally proposed in \cite{Bonocore:2021qxh}.

\subsection{Regge VS soft VS classical}
The starting point of the GWL approach for classical scattering is to consider 
the 
Regge limit 
of the (quantum) $2\to 2$ scattering amplitude, where the Mandelstam variables 
fulfill\footnote{Note also that we assume masses to be much larger than $t$.} 
$s\gg t$ 
and the exchanged gravitons 
are soft.  More 
specifically, observables at order $\kappa^{2n}$ (i.e. $nPM$) require GWLs expanded to order $\kappa^n$, which in turn 
are
constructed via the soft expansion at the $N^{n-1}E$ level.

Specifically, for scattering of spin $1/2$ objects, we consider the following 
next-to-eikonal amplitude
\begin{align}
   &\mathcal{A_{\text{E+NE}}}\nonumber\\=&\bra{0}\bar{u}(p_3)
   \widetilde{W}_{p_3}(0,\infty;b)\widetilde{W}_{p_1}(-\infty,0;b)u(p_1) 
   \bar{u}(p_4)\widetilde{W}_{p_4}(0,\infty;0)
   \widetilde{W}_{p_2}(-\infty,0;0)u(p_4)\ket{0},
   \label{regge}
\end{align}
where the definition for the GWL starting from $-\infty$ to $0$ can be 
trivially obtained from \cref{eq:GWL_spin} by flipping the boundary conditions.
The dependence over $b$ in the GWL with momenta $p_1$ and $p_3$ instead 
represents a 
constant overall shift which separates the trajectories of the hard particles 
and therefore plays the role of the impact parameter\footnote{It can be shown 
that due to worldline supersymmetry, one can exploit a gauge symmetry among the 
variables $p_i$, $b$ and $\eta$, to eliminate two components of $b$,
 choosing it for example to be perpendicular to the incoming momenta 
 \cite{Jakobsen:2021zvh}. }. 
 
Note that by including subleading soft effects with GWLs, in \cref{regge} we 
are in fact including corrections to the strict Regge limit $t/s\to 0$.
However, \cref{regge} contains quantum effects that one wishes to discard. In 
order for the method to be efficient for the calculation of classical 
observables, 
it would be desirable to isolate the classical contributions at the integrand 
level, rather than taking $\hbar\to 0$ of the final result, so that 
classical observables can be computed from the exponential form of \cref{regge}, i.e.
\begin{align}
 \mathcal{A_{\text{E+NE}}}\overset{\hbar \to 0}{\longrightarrow} e^{i(\chi_{\text{E}}+\chi_{\text{NE}})}~,
\end{align}
where $\chi_{\text{(N)E}}$ is called (next-to-)eikonal phase. 
In fact, the classical limit 
in this approach corresponds to
 a strict Regge limit, while subleading Regge corrections contribute to the 
 so-called Regge trajectory of the graviton and are quantum.
 Accordingly, in order to efficiently compute classical observables through a 
 VEV of GWL 
 operators, one has to  
 replace the GWL defined in \cref{eq:GWL_spin} with a properly-defined 
 classical 
GWL. 

This approach has been carried out in the scalar case \cite{Bonocore:2021qxh}. 
By reinstating $\hbar$, indeed, one can identify the terms in the worldline 
Lagrangian (and subsequently in the exponentiated vertices of the GWL) that 
vanishes in the limit $\hbar\to0$. Specifically, NE single emissions in the GWL 
correspond to a small recoil of the hard particle (hence a subleading 
Regge effect), which are necessarily quantum 
phenomena. Double-graviton vertices, on the other 
hand, survive the classical limit. In this way, the 
soft expansion can be interpreted in classical sense as a sum over correlations 
among soft emissions rather than an expansion in the 
energy of a single emission. 

However, in order to properly define a classical GWL, 
there are still two aspects to be addressed. 
 The first one is that, for spin variables, a naive $\hbar$ power counting is 
 not suitable for the classical limit, since the spin of a point-like particle 
 is proportional to 
 $\hbar$, hence intrinsically a quantum property. We rather need a map to 
 classical quantities. 
 The second one is that the 
 boundary conditions of the worldline path integral in the GWL must be modified. Although most of these 
 aspects have already been addressed in the recent literature (see 
 \cite{Kosower:2018adc, Maybee:2019jus, Jakobsen:2021zvh, Bonocore:2021qxh}) we 
 would like to 
 discuss them in greater detail to elucidate their role in the construction of a 
 classical GWL. 
  We discuss these aspects in 
 turn.

\subsection{Classical limit and spin}
As already mentioned, to extract the classical limit one 
has to compute the $\hbar$ scaling of different terms in the worldline 
Lagrangian. 
Specifically, momenta of external particles are regarded as classical momenta, 
but the momenta of massless force carriers are rescaled as $k=\hbar\bar{k}$ 
\cite{Kosower:2018adc}. 
As shown in \cite{Bonocore:2021qxh}, this power-counting analysis is enough
to identify the classical terms in the scalar GWLs that are necessary for 
classical observables.  
 In the case of spin, however, we first need to
 select the relevant classical variable for the spin of an extended body in GR, 
 i.e. the classical spin tensor $S^{\mu\nu}$, defined as the canonical momentum 
 conjugated to the angular velocity $\Omega_{\mu\nu}$.
The obvious choice for the spin tensor is the spin Lorentz generator 
$\sigma^{\mu\nu}$. Recalling that on the worldline path integral this generator 
naturally emerges from the symbol map $i\eta^a\eta^b\rightarrow \sigma^{ab}$, 
where $\eta_a$ represent the boundary condition for the integral over 
$\psi_a$,  
 we can make the identification 
\begin{equation}
i\eta^a\eta^b\equiv S^{ab}~, 
\end{equation}
which establishes a map\footnote{This map facilitates neglecting the spinors in the definition of $\mathcal{A}_{\text{NE}}$, as it is compatible with the usual identification $u(p)\sigma^{\mu\nu}\bar{u}(p')=S^{\mu\nu}u(p)\bar{u}(p')+{\cal O}(\hbar)$ used in traditional amplitude approaches.} between the worldline spin variable $\psi_a$ and the 
classical spin tensor $S^{ab}$.

To determine the scaling of the Grassmann fields $\psi_a$ in the classical 
limit we then
consider the following argument. In the rest frame of the particle the spatial 
components of the Lorentz generator can be arranged into the spin operator, 
whose eigenvalue scales as $\hbar s$, which suggests that the quantum 
spin scales as $\psi^a\psi^b\sim \hbar$. For the spin to survive the classical 
limit, we therefore need to consider the limit in which $s\rightarrow \infty$ 
in such a way that the spin remains a macroscopic variable. The standard way to 
achieve this with amplitude methods is by rescaling $\psi_i^a\rightarrow 
\frac{1}{\sqrt{\hbar}}\psi_i^a$. As we are going to discuss, in this work 
we implement it instead by combining the classical limit of the spin degrees of freedom with the 
soft limit.

We first observe that loops of the worldline 
fields $x$ and $\psi$ contribute only to the quantum part of the asymptotic 
propagator. The reason for that is fairly straightforward and was given in 
\cite{Mogull:2020sak}, namely that the path integral should solve the classical 
equations of motion for $x$ and $\psi$, which only requires tree level graphs. 
This can also be seen by power counting arguments. There is only one 
term in the Lagrangian that would allow for an $x$-loop without expanding 
$h_{\mu\nu}$, but this term diverges and is cancelled by the ghost 
contributions 
(see \cite{Bonocore:2021qxh} for a detailed discussion). This means for every 
other $x$ loop it is required to expand $h_{\mu\nu}$ into a power series 
introducing extra derivatives that add additional powers of $k$ and therefore 
of $\hbar$.

The argument for $\psi$ loops works similarly. Upon implementing the boundary 
conditions for $\psi$ we expand 
\begin{equation}\label{psiexp}
\psi_a \psi_b=(\eta_a+\tilde{\psi}_a)(\eta_b+\tilde{\psi}_b).
\end{equation} 
This tells us that the $\psi$-vertices are divided into three categories: one 
with all constant Grassmann vectors, which get mapped onto the spin tensor, one 
with all dynamical fields, and one with every possible mixture of the two. If a 
worldline diagram contains a $\tilde{\psi}$-loop the propagators remove one 
spin tensor more as in the corresponding loop-free diagram. Therefore, the loop 
diagram is subleading in $\hbar$ and thus non classical.
The case of mixed loops is also straightforward. Due to the expansion of the 
Grassmann vectors into the constant part $\eta$ plus a dynamical field, every 
diagram with an $x$-$\tilde{\psi}$-loop also exists as an $x$-propagator 
tree-level diagram. The $\tilde{\psi}$-propagator that would turn this diagram 
into a loop diagram thus removes a potential spin tensor and the diagram 
contributes only to the quantum part of the asymptotic propagator.
We should thus neglect all loops involving $\psi$. This can be elegantly achieved
by slightly altering the power counting in $\lambda$ introduced in \cref{Lexpanded} for
the soft expansion. For the classical GWL we additionally
rescale $\psi\rightarrow 
\sqrt{\lambda} \psi$. In this way, $\lambda$ suppresses both $x$ and $\psi$ loops, eliminating all quantum correction from the soft GWL.

Further we note that in the derivation of the soft GWL 
we 
rescaled also $\kappa\rightarrow \kappa/\lambda$, which resulted in a 
correspondence between the soft expansion and the $\lambda$-expansion. Such 
correspondence is independent of the order in the PM expansion. Specifically, 
the eikonal approximation is order $\lambda^0$, NE approximation is order 
$\lambda^{-1}$, and so forth. However, for the classical limit it is actually 
more convenient not to rescale the coupling $\kappa$. Although this choice 
spoils the simple relation between the soft and the $\lambda$ expansion, 
the classical terms emerge neatly as those of order $\lambda$.
Therefore, we conclude that the relevant Lagrangian in the classical limit is 
\begin{align}
L^{\text{cl.}}_{\text{spin}}=&\frac{\lambda}{2}\left(i\psi_a\dot{\psi}^a-\dot{x}^\mu\dot{x}^\nu
 \eta_{\mu\nu}-\kappa h_{\mu\nu}(\dot{x}^\mu\dot{x}^\nu+p^\mu 
p^\nu+2\dot{x}^\mu p^\nu)+i\kappa 
p^\mu\left(\partial^{[b}h^{a]}_\mu\psi_a\psi_b\right)\right)\!+\!\mathcal{O}(\lambda^{0}),
\label{Lexpandedclass}
\end{align}
where the graviton field is evaluated at $h_{\mu\nu}=h_{\mu\nu}(x_i+pt+x(t))$.

\subsection{The classical GWL}
After having identified the terms in the worldline Lagrangian that are 
relevant in the classical limit, we need to consider the boundary conditions of 
the path integral that leads to a proper definition of the GWL. 
In this regard, we first observe that when we derived the soft GWL in 
\cref{sec_worldlineexp} from a localized 
hard interaction to an asymptotic state of definite momentum, we considered the 
following path integral 
\begin{equation} 
	\widetilde{W}_p(0,\infty;x_i)=\int \mathcal{D}x\mathcal{D}\psi \exp\left\{i 
	\int_{0}^{\infty}L[x(t),\psi(t)]dt\right\}~.
\end{equation}
We solved the path integration perturbatively 
by using the boundary conditions $x(0)=0$ and $\psi(0)+\psi(\infty)=0$. Then, in
order describe factorized amplitudes at next-to-soft level in terms of GWLs and construct a next-to-soft function, we attached
each GWL, corresponding to an external state of hard momentum, to the hard function $\mathcal{H}$ at $t=0$.
In classical settings however, we want the incoming and outgoing 
particles to be causally connected through an uninterrupted matter line, as 
evident in the strict Regge limit description, which essentially sets $p_1=p_3$ and $p_2=p_4$. As we shall see, gluing together
two Wilson lines from $-\infty$ to $0$ and from $0$ to $+\infty$ is not the most efficient way to define a classical GWL.

There is a clear diagrammatic interpretation for this. The boundary conditions 
at $t=0$ cause the diagram to pick up a pole that originates from the 
matter propagator connecting the hard interaction to the first soft graviton emission
from that line. Due to these boundary conditions, naively multiplying two GWLs does not give the correct pole structure in the correlated multi-graviton exchange diagrams (i.e. those involving the vertices in the last two lines of \cref{eq:GWL_spin}). For instance, in the two-graviton emission diagram this can be viewed as the particle not fulfilling its equations of motion at $t=0$. As this is clearly a non-classical 
phenomenon, we need to discard it. Discarding these contributions would need to be considered part of the gluing process to obtain a classical result from $\mathcal{A}_{\text{NE}}$.

To restore the classical history of the worldline particle from 
$-\infty$ to $+\infty$, it is in fact more straightforward not to glue the GWLs but 
rather the 
actions for incoming and outgoing particles. In this way the path integral solves for the classical trajectory from the far past to the far future and not for the incoming and outgoing paths separately, in analogy with 
\cite{Mogull:2020sak}. This leads to the following definition for the classical GWL 
\begin{equation}\label{gluingactions}
	\widetilde{W}^{\textbf{cl.}}_{p}(x_i)=\int  \mathcal{D}x\mathcal{D}\psi \exp\left\{i\int_{-\infty}^{\infty}L^{\text{cl.}}_{\text{spin}}[x(t),\psi(t)]dt\right\},
\end{equation} 
where $L^{\text{cl.}}_{\text{spin}}$ is given by \cref{Lexpandedclass} and the dynamical fields have been extended to the entire real line $(-\infty,\infty)$.
Specifically, to better appreciate the 
relation with the boundary condition of the soft GWL, we decompose the action in \cref{gluingactions} into the sum of two copies of the soft action with flipped boundary conditions, i.e.  
\begin{equation}
    \int_{-\infty}^\infty L[x(t),\psi(t)]=\int_0^\infty (L_+[x_+(t),\psi_+(t)]+L_-[x_-(t),i\psi_-(t)])dt,
\end{equation}
where $L_-$ is related to $L_+$ simply by replacing $p\to -p$ and $\eta\to i\eta$\footnote{The sign flip in the spin tensor can be achieved by multiplying the Grassmann fields by a factor of $i$.}. The reader may now check that the Lagrangian is build such that this sign flip is compensated exactly by substituting $t\rightarrow -t$. This allows us to combine the Lagrangians after defining $x(t)=\theta(t)x_+(t)+\theta(-t)x_-(-t)$ (and similarly for $\psi$)
representing one uninterrupted matter line, where the worldline fields are 
allowed to fluctuate across the previous boundary at $t=0$. 

With this definition, one has to re-derive new worldline Green functions which 
are compatible with the $(-\infty,+\infty)$ boundary conditions. In fact, for the soft GWL we 
used 
\begin{equation} 
	\braket{x^\mu(t)x^\nu(s)}=i\eta^{\mu\nu} G^{\textbf{soft}}(t,s)\text{\qquad where\qquad}G^{\textbf{soft}}(t,s)=-\min(t,s).
\end{equation}
Here instead we need to use a classical Green function $G^{\textbf{cl.}}$ 
defined by
\begin{equation} 
	\braket{x^\mu(t)x^\nu(s)}=i\eta^{\mu\nu} G^{\textbf{cl.}}(t,s)\text{\qquad 
	where\qquad}G^{\textbf{cl.}}(t,s)=\frac{|t-s|}{2}~.
\end{equation}
Note that the soft and classical Green functions are related via 
\begin{align}
2G^{\textbf{cl.}}=G^{\textbf{soft}}(t,s)+G^{\textbf{soft}}(-t,-s)~.
\label{softcl}
\end{align}
The Fourier representation might at times be more helpful. It reads 
\begin{equation} 
	\tilde{G}(\omega)=-\frac{1}{2}\left(\frac{1}{(\omega+i\epsilon)^2}
	+\frac{1}{(\omega-i\epsilon)^2}\right)~,
\end{equation}
where $G(t-s)=\int\frac{d\omega}{2\pi} \tilde{G}e^{-i\omega(t-s)}$.
In Fourier space it is also more natural to find the derivatives of the Green 
functions. Finally, note that the Green function of the fermionic field 
$\psi(t)$, which in the soft calculation reads
\begin{equation} 
	\braket{\psi^a(t)\psi^b(s)}=\frac{1}{2}\eta^{ab}(\theta(t-s)-\theta(s-t))=\eta^{ab}\cal G^{\textbf{cl.}}~,
\end{equation} 
needs no modifications.

 We can thus solve the path integral for the classical GWL and from there 
 extract the vertices.
We start with the Lagrangian in \cref{Lexpandedclass} and expand the 
Grassmann variables over the background via \cref{psiexp}, to obtain
\begin{align} 
	L^{\text{cl.}}_{\text{spin}}=&\frac{i\lambda}{2}
	\psi\dot{\psi}-\frac{\lambda}{2}\dot{x}^\mu\dot{x}^\nu 
	\eta_{\mu\nu}-\frac{\kappa\lambda}{2}h_{\mu\nu}(\dot{x}^\mu\dot{x}^\nu+p^\mu
	 p^\nu+2\dot{x}^\mu p^\nu)\nonumber\\
	 &+\lambda\frac{i}{2}(p^\mu+\dot{x}^\mu) 
	\omega_\mu^{ab}(\eta_a\eta_b+2\eta_a\psi_b+\psi_a\psi_b)~,
\end{align}
where we have neglected non-classical terms. Up to order $\kappa^2$ and setting $\lambda=1$, we need the following worldline vertices\footnote{The $\mathcal{O}(\kappa)$ expansion of $\omega_\mu^{ab}$ is left implicit to avoid clutter.} 
\begin{align}
	\text{\textcircled{0}}=&\frac{i}{2}\int\limits_{-\infty}^\infty d\tau\left(-\kappa p^\mu p^\nu h_{\mu\nu}(x_i+p\tau)+p^\mu\omega^{ab}_\mu(x_i+p\tau) S_{ab}\right)\\
	\text{\textcircled{1}}=&\frac{i}{2}\int \limits_{-\infty}^\infty d\tau\left(-\kappa p^\mu p^\nu \partial_\alpha h_{\mu\nu}(x_i+p\tau)+p^\mu\partial_\alpha\omega^{ab}_\mu(x_i+p\tau) S_{ab}\right)x^\alpha(\tau)\\
	\text{\textcircled{2}}=&\frac{i}{2}\int\limits_{-\infty}^\infty d\tau \left(-2\kappa p^\nu h_{\mu\nu}(x_i+p\tau)+\omega_\mu^{ab}(x_i+p\tau)S_{ab}\right)\dot{x}^\mu(\tau)\\
	\text{\textcircled{3}}=&i\int\limits_{-\infty}^\infty d\tau 
	\left(ip^\mu\omega_\mu^{ad}(x_i+p\tau)\eta_a\right)\psi_d(\tau)~.
\end{align}
They assemble into the classical GWL using the usual exponentiation of 
connected diagrams in QFT, i.e. schematically
\begin{equation} 
\widetilde{W}^{\textbf{cl.}}
=\exp\left\{\text{\textcircled{0}}+\frac{1}{2}
\braket{\text{\textcircled{1}}\!\!-\!\!\text{\textcircled{1}}
	+\text{\textcircled{2}}\!\!-\!\!\text{\textcircled{2}}
	+\text{\textcircled{3}}\!\!-\!\!\text{\textcircled{3}}
	+2\text{\textcircled{1}}\!\!-\!\!\text{\textcircled{2}}}\right\}~.	
\end{equation} 
Explicitly, we obtain in position space
\begin{align}
\nonumber\widetilde{W}_p^{\textbf{cl.}}(x_i)=\exp\Bigg\{&\!-\frac{i\kappa}{2}\!\!\int\limits_{-\infty}^\infty \!
 d\tau_1 \left[p^\mu p^\nu-p^\mu 
S^{\nu\varrho}\partial_\varrho\right]h^1_{\mu\nu}-\frac{i\kappa^2}{8}\!\int\limits_{-\infty}^\infty
\! d\tau_1d\tau_2 \times\\&\nonumber\times\Bigg[p^\mu p^\nu p^\varrho p^\sigma 
h^1_{\mu\nu,\alpha}h^{2~,\alpha}_{\varrho\sigma}G^{\textbf{cl.}}+4p^\nu 
p^\sigma\eta^{\mu\varrho}h^1_{\mu\nu}h^2_{\varrho\sigma} 
\ddot{G}^{\textbf{cl.}}+4p^\mu p^\nu p^\sigma 
h^1_{\mu\nu,\varrho}h^{2~\varrho}_{~\sigma}\dot{G}^{\textbf{cl.}}\\
&\nonumber+S_{ab}\Bigg(-4p^\mu p^\varrho h^1_{\mu [a,d]}h^2_{\varrho[b,d]} 
{\cal G}^{\textbf{cl.}}-2p^\mu p^\nu p^\varrho h^1_{\mu 
\nu,\alpha}h^{2~,\alpha}_{\varrho[a,b]}G^{\textbf{cl.}}\\\nonumber
&+2\left(p^\mu p^\nu h^1_{\mu\nu,\varrho}h^{2\varrho[a,b]}+2p^\mu 
p^\varrho 
h^1_{\mu 
[a,b],\sigma}h^{2\varrho\sigma}\right)\dot{G}^{\textbf{cl.}}-4p_\nu 
h^{\mu\nu}_1h^2_{\mu [a,b]}\ddot{G}^{\textbf{cl.}}\\&+2p^\mu\left(
h^{\varrho 
	a}_1h^{2~b}_{\mu,\varrho}+h_1^{b\varrho}h_{\mu\varrho}^{2~,a}+\frac{1}{2}h_1^{b\varrho}h^{2~a}_{\varrho~,\mu}\right)\ddot{G}^{\textbf{cl.}}
\Bigg)\Bigg]\Bigg\}~,\label{clGWL}
\end{align}
where we are using the 
following shorthand notation 
$h^1_{\mu\nu}=h_{\mu\nu}(p\tau_1+x_i)$, 
$h^2_{\mu\nu}=h_{\mu\nu}(p\tau_2+x_i)$, 
$\dot{G}^{\textbf{cl.}}=\frac{\partial}{\partial t}G^{\textbf{cl.}}$. We also 
used   
$\ddot{G}^{\textbf{cl.}}=\frac{\partial^2}{\partial t\partial 
	s}G^{\textbf{cl.}}$. 
Alternatively, decomposing the graviton into its momentum modes allows us to solve the time integrals, yielding 
\begin{align}
\nonumber\widetilde{W}_p^{\textbf{cl.}}(x_i)=\exp&\Bigg\{\!-\frac{i\kappa}{2}\!\int_k 
\tilde{h}_{\mu\nu}(k)e^{-ikx_i}\hat{\delta}(pk)\left[p^\mu p^\nu+ip^\mu (S\cdot 
k)^\nu\right]\\
&\nonumber-\!\frac{i\kappa^2}{8}\!\int_{k,l}\frac{\tilde{h}_{\mu\nu}(k)\tilde{h}_{\varrho\sigma}(l)}{2}e^{-i(k+l)x_i}\Bigg[-2I_1 (kl)p^\mu p^\nu p^\sigma( p^\varrho+2i(S\cdot l)^\varrho)
\\
&\nonumber\!-4I_2p^\nu k^\varrho\big(\!-2ip^\mu  p^\sigma +p^\mu  (S\cdot l)^\sigma+2
p^\sigma (S\cdot k)^\mu\big)+8I_4 p^\mu 
p^\varrho S_{ab} 
k^{[c}\eta^{a]\nu}l^{[c}\eta^{b]\sigma}\\&+8I_3\eta^{\mu\varrho}(p^\nu p^\sigma 
+ip^\nu (S\cdot l)^\sigma)\Bigg]\Bigg\}~,\label{clGWLmom}
\end{align}
where $(S\cdot k)^\sigma=S^{\sigma\nu}k_\nu$, we used the shorthand notation 
$\int \frac{d^D k}{(2\pi)^D}=\int_k$ and we defined
\begin{align}
I_1&=\int\limits_{-\infty}^\infty dt\, ds\, 
G^{\textbf{cl.}}(t,s)e^{-ipkt-ipls}=-2\hat{\delta}(p(k+l))\left(\frac{1}{(p(k-l)+i\epsilon)^2}+\frac{1}{(p(k-l)-i\epsilon)^2}\right)\\
I_2&=\int\limits_{-\infty}^\infty dt\, ds \,
\dot{G}^{\textbf{cl.}}(t,s)e^{-ipkt-ipls}=-i\hat{\delta}(p(k+l))\left(\frac{1}{(p(k-l)+i\epsilon)}+\frac{1}{(p(k-l)-i\epsilon)}\right)\\
I_3&=\int\limits_{-\infty}^\infty dt\, ds \,
\ddot{G}^{\textbf{cl.}}(t,s)e^{-ipkt-ipls}=-\hat{\delta}(p(k+l))\\
I_4&=\int\limits_{-\infty}^\infty dt\, ds \,
{\cal G}^{\textbf{cl.}}(t,s)e^{-ipkt-ipls}=I_2
\end{align}
Equations (\ref{clGWL}) and (\ref{clGWLmom}) form one the main result of this 
paper. 

At this point it is instructive to compare \cref{clGWL} and \cref{clGWLmom} 
with the soft GWL of \cref{eq:GWL_spin}. The single-graviton terms have an easy 
correspondence. Specifically, both the scalar and the spin terms in the first 
line of 
\cref{clGWLmom} correspond to the sum of two 
soft emissions from $-\infty$ to $0$ and from $0$ to $+\infty$, respectively. 
These can be identified in \cref{clGWLmom} by neglecting purely quantum recoil 
terms and (in momentum space) using
\begin{align}
2\pi i\,\delta(p\cdot k)
=\frac{1}{p\cdot k+i \epsilon}-\frac{1}{p\cdot k-i \epsilon}~.
\end{align}
Note, however, that while the spin term is a subleading effect in the soft GWL, 
it 
becomes a leading effect in the classical limit. The reason is that while 
in the purely soft limit $k\cdot S$ is subleading w.r.t. terms with no powers 
of the soft momentum 
$k$, the scaling of the spin variable $S$ is enhanced in the classical limit, 
such that 
$k\cdot S={\cal O}(1)$.
Similarly, the second and third line of \cref{clGWL} can be mapped to \cref{eq:GWL_spin}  
via \cref{softcl}, which again corresponds to a sum of soft emissions from 
$-\infty$ to $0$ and from $0$ to $+\infty$, respectively. 
On top of that, we note an explicit dependence over spin for two-graviton 
emissions, which has no equivalent 
in \cref{eq:GWL_spin}. The reason is again the enhancement in the scaling of 
the spin variable. 
We conclude that the soft and the classical GWLs have the same structure, with the only differences due to the presence of different Green functions (because of the boundary conditions) and different spin dependence (because of different scaling in the classical limit). 

\subsection{Observables at 2PM}
Equipped with the classical GWL derived in the previous section, we can now 
discuss how classical observables are computed. Specifically, if we limit the
analysis 
to the conservative sector up 
to 2PM, the strategy is the same as the one outlined in the scalar case 
\cite{Bonocore:2021qxh}. 
In particular, classical observables are generated from the (next-to-)eikonal 
phase, 
which for large impact parameter $b$ can be computed as the vacuum expectation 
value (VEV) of two classical GWLs, i.e.
\begin{align}
e^{i (\chi_{\text{E}}+\chi_{\text{NE}})}=
\langle
0|
\widetilde{W}_{m_1u_1}^{\textbf{cl.}}(0) 
\widetilde{W}_{m_2u_2}^{\textbf{cl.}}(b)
|0
\rangle~,
\label{VEV}
\end{align}
where we have introduced the velocities $u_i^\mu$ via $p_i^\mu=m_iu_i^\mu$. 
Scalar and spin observables are then obtained after differentiating $\chi$ 
w.r.t. the impact parameter $b$. The VEV in \cref{VEV} 
is computed by using the standard (gauge-fixed) Einstein-Hilbert action. 
Diagrams are thus generated perturbatively by both the vertices contained in 
each GWL and by the vertices in the bulk. For sake of completeness, we  
list all relevant Feynman rules in 
\cref{sec:feynman}.

At 1PM we can then assemble the Feynman rules from the GWL into three 
diagrams, as shown in \cref{fig:1pm}. 
The computation of the three diagrams is presented in \cref{sec:1pm}. The final 
result reads
\begin{equation}\label{1pmscal}
i\chi_{\text{E},0}=i\frac{\kappa^2m_1m_2}{16\pi} 
\frac{1}{\sqrt{\gamma^2-1}}\Gamma(-\epsilon)\left(\gamma^2-\frac{1}{2}\right)(-b^2)^\epsilon~,
\end{equation}
and
\begin{equation}\label{1pm}
i\chi_{\text{E},1}=i\frac{\kappa^2m_1m_2}{16\pi} 
\frac{1}{\sqrt{\gamma^2-1}}(-b^2)^{\epsilon-1}\gamma
b_\mu(S_1^{\mu\nu}u_{2\nu}-S_2^{\mu\nu}u_{1\nu})~,
\end{equation}
where $\gamma=u_1\cdot u_2$ and we denoted with $\chi_{\text{E},i}$ terms of 
order $S^i$. The result is in agreement with the 
literature (see e.g.
\cite{DiVecchia:2020ymx, Damour:2020tta, Aoude:2020ygw, Bern:2021dqo, Jakobsen:2021zvh}). 

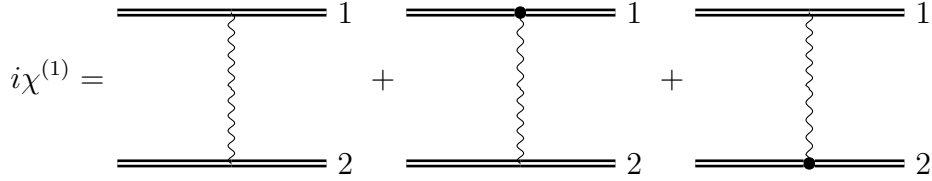
\begin{figure}[thb]
	\begin{align*} 
	i\chi^{(1)}=\begin{tikzpicture}[baseline=(c1)]
	\begin{feynman}
	\vertex (c1) at(0,0);
	\vertex (a) at (-1.5,1);
	\vertex (b) at (0,1) ;
	\vertex (c) at (1.5,1){\(1\)};
	\vertex (d) at (-1.5,-1);
	\vertex (e) at (0,-1) ;
	\vertex (f) at (1.5,-1){\(2\)};
	\diagram*{(a)--[line width=1pt,double distance=1pt](b)--[line 
	width=1pt,double 
		distance=1pt](c),(d)--[line width=1pt,double distance=1pt](e)--[line 
		width=1pt,double distance=1pt](f),(b)--[photon](c1)--[photon](e)};
	\end{feynman}
	\end{tikzpicture}+\begin{tikzpicture}[baseline=(c1)]
	\begin{feynman}
	\vertex (c1) at(0,0);
	\vertex (a) at (-1.5,1);
	\vertex (b)[dot] at (0,1) {};
	\vertex (c) at (1.5,1){\(1\)};
	\vertex (d) at (-1.5,-1);
	\vertex (e) at (0,-1) ;
	\vertex (f) at (1.5,-1){\(2\)};
	\diagram*{(a)--[line width=1pt,double distance=1pt](b)--[line 
	width=1pt,double 
		distance=1pt](c),(d)--[line width=1pt,double distance=1pt](e)--[line 
		width=1pt,double distance=1pt](f),(b)--[photon](c1)--[photon](e)};
	\end{feynman}
	\end{tikzpicture}+\begin{tikzpicture}[baseline=(c1)]
	\begin{feynman}
	\vertex (c1) at(0,0);
	\vertex (a) at (-1.5,1);
	\vertex (b) at (0,1) ;
	\vertex (c) at (1.5,1){\(1\)};
	\vertex (d) at (-1.5,-1);
	\vertex (e)[dot] at (0,-1) {};
	\vertex (f) at (1.5,-1){\(2\)};
	\diagram*{(a)--[line width=1pt,double distance=1pt](b)--[line 
	width=1pt,double 
		distance=1pt](c),(d)--[line width=1pt,double distance=1pt](e)--[line 
		width=1pt,double distance=1pt](f),(b)--[photon](c1)--[photon](e)};
	\end{feynman}
	\end{tikzpicture}
	\end{align*}
	\caption{Diagrams contributing to the eikonal phase at 1PM. Dots represent 
	spin vertices.}
	\label{fig:1pm}
\end{figure}

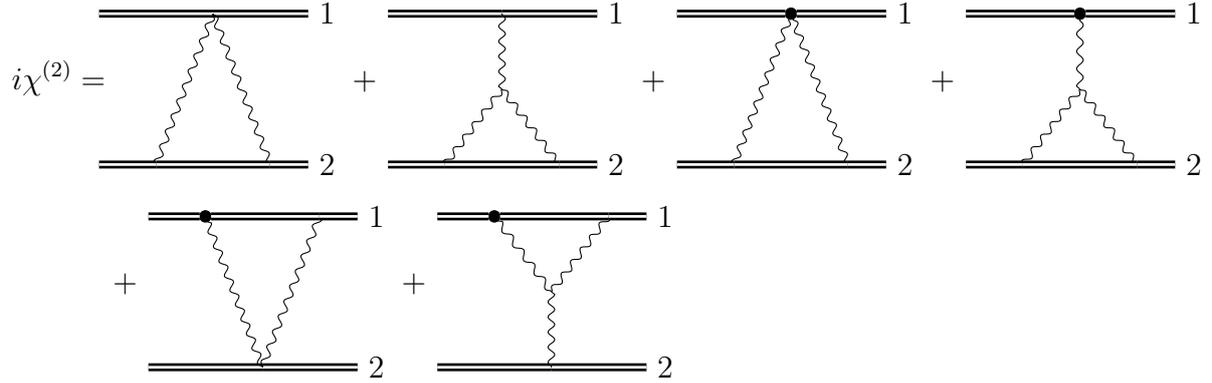
\begin{figure}[thb]
	\begin{align*} 
	i\chi^{(2)}=&\!\!\begin{tikzpicture}[baseline=(c1)]
	\begin{feynman}
	\vertex (c1) at (0,0);
	\vertex (u1) at (-1.5,1);
	\vertex (u2) at (0,1);
	\vertex (u3) at (1.5,1) {\(1\)};
	\vertex (l1) at (-1.5,-1);
	\vertex (l2) at (-0.75,-1);
	\vertex (l3) at (0.75,-1);
	\vertex (l4) at (1.5,-1) {\(2\)};
	\diagram*{(u1)--[line width=1pt,double distance=1pt](u2)--[line 
		width=1pt,double 
		distance=1pt](u3),u2--[photon](l2),u2--[photon](l3),(l1)--[line
		width=1pt,double distance=1pt](l2)--[line width=1pt,double 
		distance=1pt](l3)--[line width=1pt,double distance=1pt](l4)};
	\end{feynman}
	\end{tikzpicture}
	+\begin{tikzpicture}[baseline=(c1)]
	\begin{feynman}
	\vertex (c1) at (0,0);
	\vertex (u1) at (-1.5,1);
	\vertex (u2) at (0,1);
	\vertex (u3) at (1.5,1) {\(1\)};
	\vertex (l1) at (-1.5,-1);
	\vertex (l2) at (-0.75,-1);
	\vertex (l3) at (0.75,-1);
	\vertex (l4) at (1.5,-1) {\(2\)};
	\diagram*{(u1)--[line width=1pt,double distance=1pt](u2)--[line 
		width=1pt,double 
		distance=1pt](u3),u2--[photon](c1)--[photon](l2),(c1)--[photon](l3),(l1)--[line
		width=1pt,double distance=1pt](l2)--[line width=1pt,double 
		distance=1pt](l3)--[line width=1pt,double distance=1pt](l4)};
	\end{feynman}
	\end{tikzpicture}+\begin{tikzpicture}[baseline=(c1)]
	\begin{feynman}
	\vertex (c1) at (0,0);
	\vertex (u1) at (-1.5,1);
	\vertex (u2) [dot] at (0,1){};
	\vertex (u3) at (1.5,1) {\(1\)};
	\vertex (l1) at (-1.5,-1);
	\vertex (l2) at (-0.75,-1);
	\vertex (l3) at (0.75,-1);
	\vertex (l4) at (1.5,-1) {\(2\)};
	\diagram*{(u1)--[line width=1pt,double distance=1pt](u2)--[line 
		width=1pt,double 
		distance=1pt](u3),u2--[photon](l2),u2--[photon](l3),(l1)--[line
		width=1pt,double distance=1pt](l2)--[line width=1pt,double 
		distance=1pt](l3)--[line width=1pt,double distance=1pt](l4)};
	\end{feynman}
	\end{tikzpicture}
	+\begin{tikzpicture}[baseline=(c1)]
	\begin{feynman}
	\vertex (c1) at (0,0);
	\vertex (u1) at (-1.5,1);
	\vertex (u2) [dot] at (0,1){};
	\vertex (u3) at (1.5,1) {\(1\)};
	\vertex (l1) at (-1.5,-1);
	\vertex (l2) at (-0.75,-1);
	\vertex (l3) at (0.75,-1);
	\vertex (l4) at (1.5,-1) {\(2\)};
	\diagram*{(u1)--[line width=1pt,double distance=1pt](u2)--[line 
		width=1pt,double 
		distance=1pt](u3),u2--[photon](c1)--[photon](l2),(c1)--[photon](l3),(l1)--[line
		width=1pt,double distance=1pt](l2)--[line width=1pt,double 
		distance=1pt](l3)--[line width=1pt,double distance=1pt](l4)};
	\end{feynman}
	\end{tikzpicture}\nonumber\\
	&+\begin{tikzpicture}[baseline=(c1)]
	\begin{feynman}
	\vertex (c1) at (0,0);
	\vertex (u1) at (-1.5,-1);
	\vertex (u2) at (0,-1);
	\vertex (u3) at (1.5,-1) {\(2\)};
	\vertex (l1) at (-1.5,1);
	\vertex (l2) [dot] at (-0.75,1){};
	\vertex (l3) at (0.75,1);
	\vertex (l4) at (1.5,1) {\(1\)};
	\diagram*{(u1)--[line width=1pt,double distance=1pt](u2)--[line 
		width=1pt,double 
		distance=1pt](u3),u2--[photon](l2),u2--[photon](l3),(l1)--[line
		width=1pt,double distance=1pt](l2)--[line width=1pt,double 
		distance=1pt](l3)--[line width=1pt,double distance=1pt](l4)};
	\end{feynman}
	\end{tikzpicture}
	+\begin{tikzpicture}[baseline=(c1)]
	\begin{feynman}
	\vertex (c1) at (0,0);
	\vertex (u1) at (-1.5,-1);
	\vertex (u2) at (0,-1);
	\vertex (u3) at (1.5,-1) {\(2\)};
	\vertex (l1) at (-1.5,1);
	\vertex (l2) [dot] at (-0.75,1) {};
	\vertex (l3) at (0.75,1);
	\vertex (l4) at (1.5,1) {\(1\)};
	\diagram*{(u1)--[line width=1pt,double distance=1pt](u2)--[line 
		width=1pt,double 
		distance=1pt](u3),u2--[photon](c1)--[photon](l2),(c1)--[photon](l3),(l1)--[line
		width=1pt,double distance=1pt](l2)--[line width=1pt,double 
		distance=1pt](l3)--[line width=1pt,double distance=1pt](l4)};
	\end{feynman}
	\end{tikzpicture}
	\end{align*}
	\caption{Diagrams contributing to the 2PM eikonal phase. Dots represent 
	spin vertices. Mirrored diagrams 
	are 
		omitted.}
	\label{fig:2pm}
\end{figure}

At  2PM there are more diagrams involved, as shown in \cref{fig:2pm}. 
The final result for the scalar contribution reads
\begin{equation}\label{2pmscal}
i\chi_{\text{NE},0}=\frac{i\kappa^4 m_1m_2(m_1+m_2) 
(15\gamma^2-3)}{8\cdot32^2\pi 
	(-b^2)\sqrt{\gamma^2-1}},
\end{equation}
while the term linear in spin is
\begin{equation}\label{2pm}
i\chi_{\text{NE},1}=\frac{i\kappa^4 m_1m_2 \gamma (5\gamma^2-3) }{4\cdot32^2\pi 
	(-b^2)^{\frac{3}{2}}(\gamma^2-1)^{\frac{3}{2}}}\left(\left(4m_1+3m_2\right)(b\cdot S_1\cdot u_2)-\left(4m_2+3m_1\right)(b\cdot S_2\cdot u_1)\right)~.
\end{equation}
 We find again complete agreement with the 
literature (see e.g.
\cite{DiVecchia:2020ymx, Damour:2020tta, Aoude:2020ygw, Bern:2021dqo, Jakobsen:2021zvh}).

\subsection{A brief comparison to WQFT and HEFT approaches}
 We now make a comparison of the GWL approach with similar methods that have 
 been recently proposed in the literature. 
 In particular, the analogy between this work and 
 the 
 WQFT method of \cite{Jakobsen:2021zvh} 
  is evident. They both start from the same supersymmetric
  worldline model and the
classical limit is performed at the Lagrangian level. The main difference is 
that
the WQFT approach treats worldline fields and the bulk fields (i.e. the 
graviton) on the same footing, whereas in our approach all 
worldline integrals are solved first, leading to the representation
of the eikonal phase as the VEV of GWLs. What seems at first a mere 
prescription for the order of integrations (leading to a re-organization of 
diagrams and Feynman rules), has in fact deeper consequences. 
By solving all worldline integral first, we are
able to identify how Wilson-line operators must be modified 
in order to capture the classical dynamics of spinning objects interacting 
gravitationally at higher orders in the PM expansion. This 
opens up the 
possibility to exploit techniques that have 
been developed over 
the years for the renormalization of Wilson loops and their generalizations
\cite{Korchemskaya:1994qp, Korchemskaya:1996je, Drummond:2008vq, 
Caron-Huot:2010ryg, Becher:2010pd, 
Moult:2018jjd, Beneke:2018gvs, Beneke:2024cpq}. 

Moreover, there is another interesting feature of the GWL approach, which 
follows from the crucial role played by the soft expansion. By 
taking the route over the soft GWL we have been able to 
identify the correct spin tensor in terms of the initial data of the Grassmann 
fields $\psi_a$ without ever comparing to the classical EFT. Therefore, the GWL 
approach brings the 
advantage that one does not need knowledge of the 
correct EFT of gravity in order to identity the classical degrees of freedom.

We can also comment on the connection with a conceptually related 
approach 
\cite{Brandhuber:2021eyq,Brandhuber:2023hhy,Brandhuber:2023hhl}, which uses a 
Heavy Mass Effective Theory (HEFT) to construct the classical gravitational 
scattering 
amplitude. This approach builds on the fact that in the classical limit the 
masses of the matter particles are much larger than the momentum transfer. 
In this limit one can define the so called HEFT amplitudes 
$\mathcal{A}_n(q,\bar{p})$, which are the classical $\mathcal{O}(\bar{m}^2)$ 
pieces of the two-massive $n-2$-graviton amplitude. Here 
$\bar{p}=\frac{1}{2}(p_1+p_1')$ is the average\footnote{Using the more conventional $\hbar\rightarrow 0$ limit for scattering 
amplitudes one would run into the problem of having feed down terms since 
$q^2=\pm 2pq$ does not have a homogeneous scaling in $\hbar$, which is avoided with the barred variables.} of incoming and outgoing 
momentum and $\bar{m}^2=\bar{p}^2$. These on-shell amplitudes are then used to 
construct loop integrands via unitarity methods. 

In spite of this computational 
difference in constructing the amplitudes there is a clear connection between 
the two approaches, which originates in the way the limit is implemented to extract 
the classical pieces of the amplitudes. The clear advantage of the HEFT 
approach lies in the fact that by expanding in the masses and choosing the 
appropriate dynamical variables one can show that only two massive particle 
irreducible (2mPI) cut diagrams contribute to the classical amplitude, while  
all other diagrams are either quantum or iteration pieces. 
The use of dressed propagators in the GWL approach exponentiates the amplitude, hence taking care of all iteration pieces 
automatically. Furthermore, the use of time symmetric worldline correlators in the GWL approach leads to the identification of the background parameters $p_1$ and $p_2$ of the particle trajectories with the average of the particles momenta in the far past and far future. We therefore expect the on-shell HEFT amplitudes to directly 
correspond to the on-shell version of the classical GWL.

To better appreciate the close relation between the two methods, we
consider the following simple calculation.
Contracting the classical one-graviton vertex (we do not distinguish between 
spinning and non-spinning vertices here) with the polarization tensor 
$\varepsilon_{\mu\nu}=\varepsilon_\mu\varepsilon_\nu$, we 
obtain  
\begin{equation}
\varepsilon_\mu\varepsilon_\nu \Big(\begin{tikzpicture}[baseline=(a)]
\begin{feynman}
\vertex (a);
\vertex [right=of a] (b) ;
\vertex [above=of b] (c) {\(k,\mu\nu\)};
\vertex [right=of b] (d) {\(p,S\)};
\diagram*{(a)--[line width=1pt,double distance=1pt] (b)--[line width=1pt,double 
	distance=1pt](d), (b)--[photon](c)};
\end{feynman}
\end{tikzpicture}\Big)=\frac{1}{2}\int_k 
\hat{\delta}(pk)\mathcal{A}_3^{\text{HEFT}}(p)~,
\end{equation}
where the diagram on the l.h.s. represents the corresponding vertex in the GWL. 
A similar analogy holds for the HEFT Compton amplitude, which 
reads
\begin{equation}
\varepsilon_\mu\varepsilon_\nu\varepsilon_\varrho
\varepsilon_\sigma\Big(\begin{tikzpicture}[baseline=(c1)]
\begin{feynman}
\vertex (c1) at (0,0);
\vertex (u1) at (-1.5,-1);
\vertex (u2) at (0,-1);
\vertex (u3) at (1.5,-1) {\(p,S\)};
\vertex (l1) at (-1.5,1);
\vertex (l2) at (-0.75,1) {\(k_1,\mu\nu\)};
\vertex (l3) at (0.75,1){\(k_2,\varrho\sigma\)};
\vertex (l4) at (1.5,1) ;
\diagram*{(u1)--[line width=1pt,double distance=1pt](u2)--[line 
	width=1pt,double distance=1pt](u3),(u2)--[photon](l2),(u2)--[photon](l3)};
\end{feynman}
\end{tikzpicture}+\begin{tikzpicture}[baseline=(c1)]
\begin{feynman}
\vertex (c1) at (0,0);
\vertex (u1) at (-1.5,-1);
\vertex (u2) at (0,-1);
\vertex (u3) at (1.5,-1) {\(p,S\)};
\vertex (l1) at (-1.5,1);
\vertex (l2) at (-0.75,1) {\(k_1,\mu\nu\)};
\vertex (l3) at (0.75,1){\(k_2,\varrho\sigma\)};
\vertex (l4) at (1.5,1) ;
\diagram*{(u1)--[line width=1pt,double distance=1pt](u2)--[line 
	width=1pt,double 
	distance=1pt](u3),u2--[photon](c1)--[photon](l2),(c1)--[photon](l3)};
\end{feynman}
\end{tikzpicture}\Big)=\frac{1}{2}\int_{k_1,k_2} 
\hat{\delta}(p(k_1+k_2))\mathcal{A}_4^{\text{HEFT}}(p)~.
\end{equation}
Note that the relations above always include an additional factor of $1/2$ and 
the 
orthogonality delta function. This is simply due to the fact that when the GWL 
is 
defined in position space it directly gives the impact parameter space 
amplitude. A vertex from the upper and lower line then together yields the 
measure 
$\frac{1}{4}\int_q\hat{\delta}(qp_1)\hat{\delta}(qp_2)e^{-iqb}=\frac{1}{4m_1m_2\sqrt{\gamma^2-1}}\int_q^{D-2}e^{i\vec{q}\vec{b}}$.

Finally, we observe that this connection between the HEFT and the GWL methods 
sheds light on the relation between the HEFT and the WQFT approach. Indeed, by 
mapping the on-shell amplitudes of HEFT onto the vertices of the GWL, we are in 
fact mapping the worldline vertices of the WQFT to the operators of the HEFT. 
In this regard, the GWL method can be considered as somewhat intermediate, since the connection between the different approaches becomes only apparent by solving for the worldline fields and bulk field separately. It is a 
supersymmetric worldline approach like the WQFT, where the classical limit is 
performed at the level of an 
intrinsically off-shell object such as the Lagrangian, unlike the on-shell 
amplitude framework. However, it is also formulated in terms of operators in a 
mode-expansion 
(specifically, the soft expansion) in analogy with the heavy-mass expansion of 
HEFT.

\section{Conclusions}
\label{sec_concl}

In this work we have 
 provided
two representations for Generalized Wilson 
lines operators that are relevant for gravitational scattering amplitudes
at the classical and at the quantum level, respectively. 
Specifically, we have derived a representation for a soft GWL for spin $1/2$ particles
(\cref{eq:GWL_spin}) and a classical GWL (\cref{clGWL} and \cref{clGWLmom}) which generates observables with linear corrections
in spin. In this way we have extended results for the scalar case presented in \cite{Bonocore:2020xuj}.

The soft GWL is a compact tool to describe the factorization and the 
exponentiation 
of
subleading soft gravitons dressing the external states of a quantum scattering 
amplitude. As in the scalar case, up to the order in perturbation theory 
implemented  
in this work (i.e. next-to-eikonal), soft GWLs exponentiate not only the 
single-graviton factor of soft theorems but also pairwise correlations of soft 
emissions. The final result shows a remarkable simplicity, where the only
modification w.r.t. the scalar case is given by the single-graviton soft 
factor depending on the Lorenz spin generator. No modification is 
necessary for the correlated emissions, which thus remain spin-independent.

The classical GWL is constructed in analogy with the soft one, with two main 
differences: a classical limit that is performed at the Lagrangian 
level and different boundary conditions. The boundary conditions in particular 
have 
been chosen according to an interrupted straight line from $-\infty$ to 
$+\infty$. Classical spin variables are mapped to quantum spin 
numbers which tend to infinity in the classical limit and are thus enhanced compared to the soft limit. 
This different power 
counting implies an important difference between the soft and the classical 
GWL: both single and double graviton emissions depend on spin in the classical case, in agreement 
with other methods explored in the literature. We have cross-checked our 
representation by rederiving the eikonal phase at 2PM.

In spite of these differences, it is actually the analogy between the soft and 
the classical GWL that turns out to be particularly useful. Both of them are derived 
in the supersymmetric worldline formalism by assuming a soft 
expansion for the background field. In this way, the correspondence 
between the soft expansion and the PM expansion provides a map between quantum 
spin (represented by the worldline Grassmann variables) and the classical spin, 
thus not requiring any reference to the worldline EFT. 

We have also discussed the relation with similar methods in the literature, and 
specifically with the WQFT and the HEFT. The relation with former is 
straightforward, since GWLs are constructed by integrating out the same 
worldline degrees of freedom that appear in the WQFT. We have also shown a 
clear connection between the GWL and the HEFT, by matching the single and double graviton emissions of the GWL with the on-shell HEFT amplitudes. We therefore conclude that the 
classical GWL provides a bridge between the WQFT and the HEFT, making their 
equivalence 
manifest.  

On a more technical side, some new features have been analyzed in the derivation 
of the GWL from first principles in the supersymmetric worldline formalism.
Specifically, we have thoroughly discussed 
the role of fermion doubling in the construction of the GWL and how to consistently integrate out the auxiliary 
Grassmann fields.
Additionally, in analogy with the gauge theory case discussed in 
\cite{Bonocore:2020xuj}, we have showed that the open line topology of the dressed 
propagator with twisted boundary conditions enables to evaluate the 
supersymmetry charge $Q$ at asymptotic time $t_f\rightarrow \infty$, such that 
it reduces to the numerator contribution of the free Dirac propagator.

The results of this work can be extended in different directions. For instance,  at the 
quantum level a systematic treatment of collinear interactions, 
which become relevant at subleading powers as shown in \cite{Beneke:2022ehj}, would be desirable. At the classical level, a very interesting direction 
is the inclusion
of higher spin particles in the GWL framework \cite{newpaper}.

\section*{Acknowledgments}
This work was supported by the Deutsche Forschungsgemeinschaft (DFG) through 
the 
Research Training Group "GRK 2149: Strong and Weak Interactions – from Hadrons 
to Dark Matter" and by the Excellence Cluster ORIGINS funded by the DFG under 
Grant No. EXC- 2094-390783311. AK gratefully acknowledges the support and the hospitality of the CERN Theoretical Physics Department during early stages of this work.

\appendix

\section{Fermionic path integral}
\label{sec:ferm}
In order to integrate out 
$\psi_2$ from the fermionic part of the path integral, 
we have to deal with the dependence on $\psi_2$ in the boundary term 
$\bar{\chi}_N\chi_{N-1}$ and in the kinetic term to show that indeed $\psi_2$ 
can be integrated out. In doing so we derive the propagator for the field 
$\psi_1$.

We start by taking the discretised expression for the boundary and kinetic term 
\begin{equation} 
	\prod_{j=1}^{N-1}\left(d^4\bar{\chi}_jd^4\chi_{j}\right)\exp\left\{\bar{\chi}_N\chi_{N-1}-\sum_{n=1}^{N-1}\bar{\chi}_n(\chi_{n}-\chi_{n-1})\right\},
\end{equation}
where we then expand $\chi$ into real and imaginary parts. Here we need to keep in mind that the sum contains $\chi_0=\chi_i$, which is our initial condition for $\chi$. We will for this calculation keep the boundaries $\bar{\chi}_f$ and $\chi_i$ fixed and explicit. Additionally, with the expansion into real and imaginary parts we need to change the integration measure. Luckily, due to the $\sqrt{2}$ factors, the Jacobian determinant of this transformation has modulus one, thus 
\begin{equation} 
	\prod_{j=1}^{N-1}d^4\bar{\chi}_jd^4\chi_{j}=	\prod_{j=1}^{N-1}d^4\psi_{1,j}d^4\psi_{2,j}.
\end{equation}
All together this yields
\begin{align}
	&\nonumber\int \prod_{j=1}^{N-1}\left(d^4\psi_{1,j}d^4\psi_{2,j}\right) \exp\Bigg\{-\frac{1}{2}\sum_{n=2}^{N-1}(\psi_{1,n}-i\psi_{2,n})(\psi_{1,n}-\psi_{1,n-1}+i\psi_{2,n}-i\psi_{2,n-1})+\\
	&+\frac{1}{\sqrt{2}}\bar{\chi}_f(\psi_{1,N-1}+i\psi_{2,N-1})-\frac{1}{\sqrt{2}}\chi_i(\psi_{1,1}-i\psi_{1,1})-\frac{1}{2}(\psi_{1,1}-i\psi_{1,1})(\psi_{1,1}+i\psi_{1,1})\Bigg\}.
\end{align}
Now let us rewrite the exponential in matrix notation to make the structure more apparent. 
With the following definitions
\begin{align}
	M&=\frac{1}{2}\begin{pmatrix}
		0&1&0&0&\dots\\
		-1&0&1&0&\dots\\
		0&-1&0&1&\dots\\
		0&0&-1&0&\dots\\
		\vdots&\vdots&\vdots&\vdots&\ddots\\
	\end{pmatrix}\\
	\vec{\psi}_{2,B}&=\begin{pmatrix}
		\chi_i\\
		0\\
		\vdots\\
		0\\
		\bar{\chi}_f
	\end{pmatrix}\\
	\vec{\psi}_{\psi_1}&=\begin{pmatrix}
		\psi_{1,2}-2\psi_{1,1}~~~~~\\
		\psi_{1,3}-2\psi_{1,2}+\psi_{1,1}\\
		\vdots\\
		\psi_{1,N-1}-2\psi_{1,N-2}+\psi_{1,N-3}\\
		~~~~~-2\psi_{1,N-1}+\psi_{1,N-2}
	\end{pmatrix}\\
	\vec{\psi}_{1,B}&=\begin{pmatrix}
		-\chi_i\\
		0\\
		\vdots\\
		0\\
		\bar{\chi}_f
	\end{pmatrix}\\
	\vec{\psi}_1&=\begin{pmatrix}
		\psi_{1,1}\\\psi_{1,2}\\\vdots\\\psi_{1,N-1}
	\end{pmatrix} ~~~~\text{~and~}~~~~~\vec{\psi}_2=\begin{pmatrix}
		\psi_{2,1}\\\psi_{2,2}\\\vdots\\\psi_{2,N-1}
	\end{pmatrix}
\end{align}
the exponential takes the form 
\begin{align}
	\exp\left\{	-\frac{1}{2}\vec{\psi}_1^{~T}M\vec{\psi}_1+\frac{1}{\sqrt{2}}\vec{\psi}_{1,B}^{~T}\vec{\psi}_1-\frac{1}{2}\vec{\psi}_2^{~T}M\vec{\psi}_2+i\left(\frac{1}{\sqrt{2}}\vec{\psi}_{2,B}+\frac{1}{2}\vec{\psi}_{\psi_1}\right)^{T}\vec{\psi}_2\right\}.
\end{align}
We can integrate out the field $\psi_2$ using the standard Gaussian integral for Grassmann numbers yielding
\begin{align}
	\exp\left\{-\frac{1}{2}\vec{\psi}_1^{~T}M\vec{\psi}_1\!+\!\frac{1}{\sqrt{2}}\vec{\psi}_{1,B}^{~T}\vec{\psi}_1\!+\!\frac{1}{2}\left(\frac{1}{\sqrt{2}}\vec{\psi}_{2,B}\!+\!\frac{1}{2}\vec{\psi}_{\psi_1}\right)^{T}M^{-1}\left(\frac{1}{\sqrt{2}}\vec{\psi}_{2,B}\!+\!\frac{1}{2}\vec{\psi}_{\psi_1}\right)\right\}.
\end{align}
The inverse of $M$ is given by 
\begin{equation}
	M^{-1}=2\begin{pmatrix}
		0&-1&0&-1&0&-1&\dots\\
		1&0&0&0&0&0&\dots\\
		0&0&0&-1&0&-1&\dots\\
		1&0&1&0&0&0&\dots\\
		0&0&0&0&0&-1&\dots\\
		1&0&1&0&1&0&\dots\\
		\vdots&\vdots&\vdots&\vdots&\vdots&\vdots&\ddots\\		
	\end{pmatrix}.
\end{equation}
We have thus successfully\footnote{Actually the integration also creates the prefactor $\text{Pf}(M)=\frac{1}{2^{N-1}}$, which is cancelled by the integration over $\psi_{1}$ as this gives $\text{Pf}(G)=2^{N-1}$.} integrated out $\psi_2$, but apparently we did not simplify things. On the contrary, at first glance we seem to have made things a lot more complicated. Luckily, there is still a lot of simplification that can be done. The vectors $\psi_{\psi_1}$ still depend on $\psi_1$, and they do so through this matrix equation
\begin{equation} 
	\vec{\psi}_{\psi_1}=A\vec{\psi}_1~~~~\text{with}~~~~A=\begin{pmatrix}
		-2&1&0&0&\dots\\
		1&-2&1&0&\dots\\
		0&1&-2&1&\dots\\
		0&0&1&-2&\dots\\
		\vdots&\vdots&\vdots&\vdots&\ddots\\
	\end{pmatrix}.
\end{equation}
We can also connect the boundary vectors by 
\begin{equation} 
	\vec{\psi}_{1,B}=C\vec{\psi}_{2,B}~~~\text{with}~~~C=\begin{pmatrix}
		-1&0\\
		0&\mathds{1}\\
	\end{pmatrix}
\end{equation}
and a quick calculation gives
\begin{equation} 
	\frac{1}{4}\vec{\psi}_{2,B}^{~T}M^{-1}\vec{\psi}_{2,B}=\bar{\chi}_f\chi_i.
\end{equation}
Thus, the exponential can be rewritten as
\begin{align}
	\exp\left\{\bar{\chi}_f\chi_i-\frac{1}{2}\vec{\psi}_1\underbrace{(M-\frac{1}{4}AM^{-1}A)}_{G}\vec{\psi}_1+\frac{1}{\sqrt{2}}\left((C-\frac{1}{2}AM^{-1})\vec{\psi}_{2,B}\right)^T\vec{\psi}_1\right\},
\end{align}
Now we see the desired structure emerge. The matrix $G$ defines the propagator 
for $\psi_1$ through its inverse and the boundary term $\bar{\chi}_f\chi_i$ is 
precisely the one we expect to appear since 
$\braket{\bar{\chi}_f|\chi_i}=e^{\bar{\chi}_f\chi_i}$. 
All that is left to do is to expand the field $\psi_1$ into a background field 
and a quantum fluctuation, add a source term and perform the integration over 
$\psi_1$. The background-quantum split is given by 
\begin{equation} 
	\psi_{1,n}=\frac{\eta}{2}+\tilde{\psi}_{1,n},
\end{equation}
where the quantum field $\tilde{\psi}_{1,n}$ vanishes at the boundary for $n=0$ and $n=N$. The source term takes the form $\vec{J}^{~T}\vec{\tilde{\psi}}_1$. Integrating over the remaining Grassmann fields yields
\begin{equation}\label{exponential}
	\exp\left\{\bar{\chi}_f\chi_i-\frac{1}{2}\vec{J}^{~T}G^{-1}\vec{J}\right\}=\exp\left\{\bar{\chi}_f\chi_i-\frac{1}{2}J_iG^{-1}_{ij}J_j\right\}.
\end{equation}
The form of $G^{-1}$ is 
\begin{equation} 
	G^{-1}_{ij}=\left\{\begin{array}{ll}
		-\frac{1}{2}&\text{for~}i<j\\
		0&\text{for~}i=j\\
		\frac{1}{2}&\text{for~}i>j
	\end{array}\right. ~.
\end{equation}
Here we also needed the fact that 
\begin{equation} 
	\left((C-\frac{1}{2}AM^{-1})\vec{\psi}_{2,B}\right)^TG^{-1}\left((C-\frac{1}{2}AM^{-1})\vec{\psi}_{2,B}\right)=0.
\end{equation}
Now $G^{-1}_{ij}$ is precisely the propagator 
\begin{equation} 
	\contraction{}{\psi}{_{1,i}^a}{\psi}\psi_{1,i}^a\psi_{1,j}^b=-\eta^{ab}G^{-1}_{ij},
\end{equation}
thus the continuum version is 
\begin{equation}\label{wl_fermionicpathintegral_psipropagator}
	\contraction{}{\psi}{_{1}^a(t)}{\psi}\psi_{1}^a(t)\psi_{1}^b(t')
	=\frac{1}{2}\eta^{ab}(\theta(t-t')-\theta(t'-t))~,
\end{equation}
as quoted in  \cref{fermionProp} in the main text. 

Equipped with this propagator, we can elaborate \cref{exponential} to get the 
final representation for the fermionic path integral.  
Firstly, 
we see that the boundary term $\bar{\chi}_f\chi_i$ is entirely generated 
through the $\psi_2$ integration and not problematic since it is fixed. Secondly, observe that the continuum propagator for 
$\psi_1(t)$ is 
precisely the antisymmetrized Green's function for the differential operator 
$\frac{d}{dt}$. 
It is therefore justified to symbolically write the path integral over the real 
Grassmann field $\psi_1(t)$ in the usual path integral notation 
\begin{equation}\label{eq:app_realgrassmannpathintegral}
\bra{\bar{\chi}_f}e^{\hat{A}T}\ket{\chi_i}=\int \mathcal{D}\psi_1 \exp \left\{ 
\bar{\chi}_f\chi_i+\int_0^T dt 
\left(-\frac{1}{2}\psi_1\dot{\psi}_1+A_{\text{px}}(\psi_1)\right) \right\},
\end{equation}
in agreement with \cref{fermionPath}. 

\section{A Theorem about fermionic path integrals}\label{app:theo}

We prove a theorem about expectation values of Majorana operators that 
bears some similarity to Wick's theorem. 
\begin{theorem}
	For a product of Majorana fermion operators $\hat{\psi}_1^a$ the 
	expectation value is given by
	$$\braket{\prod_{i=1}^{n}\hat{\psi}_1^{a_i}}=\sum\limits_{\begin{subarray}
		{c}\text{possible}\\\text{contractions}
		\end{subarray}}\left[\text{contractions}\times\prod\limits_{i\text{ not 
			contracted}}\braket{\hat{\psi}_1^{a_i}}\right].$$
\end{theorem}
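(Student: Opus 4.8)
The plan is to prove the statement by induction on the number $n$ of Majorana operators, exploiting the fermion-doubled creation and annihilation operators to ``peel'' fields off onto the coherent bra and ket. First I would split each field according to \cref{fermdoubl} into its annihilation and creation parts,
\begin{equation}
\hat\psi_1^a = A^a + C^a, \qquad A^a=\tfrac{1}{\sqrt2}\hat\psi^a, \qquad C^a=\tfrac{1}{\sqrt2}\hat\psi^{a\dagger},
\end{equation}
which satisfy $\{A^a,C^b\}=\tfrac12\eta^{ab}$ and $\{A^a,A^b\}=\{C^a,C^b\}=0$. Because the coherent states are eigenstates, $A^a\ket{\chi}=\tfrac{\chi^a}{\sqrt2}\ket{\chi}$ and $\bra{\bar\chi}C^a=\tfrac{\bar\chi^a}{\sqrt2}\bra{\bar\chi}$, so that the elementary one-point function reproduces $\braket{\hat\psi_1^a}=\tfrac{\bar\chi^a+\chi^a}{\sqrt2}$ and the single nonvanishing anticommutator reproduces the contraction $\tfrac12\eta^{ab}$ of \cref{wick-contr}. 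The base cases then follow directly: $n=1$ is the definition of the one-point function, and a short evaluation gives $\braket{\hat\psi_1^a\hat\psi_1^b}=\tfrac{\eta^{ab}}{2}+\braket{\hat\psi_1^a}\braket{\hat\psi_1^b}$ for $n=2$.

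The key technical ingredients are two peeling identities valid for any operator $O$ in the algebra: a creation operator on the far left acts directly on the bra, $\braket{C^{a}O}=\tfrac{\bar\chi^{a}}{\sqrt2}\braket{O}$, while an annihilation operator on the far right acts directly on the ket, $\braket{O A^{a}}=(-1)^{|O|}\tfrac{\chi^{a}}{\sqrt2}\braket{O}$, where $|O|$ denotes the Grassmann parity of $O$. For the inductive step I would isolate the leftmost field $\hat\psi_1^{a_1}=C^{a_1}+A^{a_1}$. The $C^{a_1}$ term peels off immediately, producing $\tfrac{\bar\chi^{a_1}}{\sqrt2}$ times the expectation of the remaining $n-1$ operators. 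For the $A^{a_1}$ term I would anticommute it rightward through $\hat\psi_1^{a_2},\dots,\hat\psi_1^{a_n}$ until it reaches the ket. Since $\{A^{a_1},A^{a_j}\}=0$, each step produces only the anticommutator $\{A^{a_1},C^{a_j}\}=\tfrac12\eta^{a_1a_j}$, i.e.\ exactly one contraction with the $j$-th field, leaving the product of the other $n-2$ fields; when $A^{a_1}$ finally reaches the ket it peels off as $(-1)^{n-1}\tfrac{\chi^{a_1}}{\sqrt2}$.

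Assembling the two peeled pieces, the contributions in which $A^{a_1}$ has traversed the whole string combine with the $C^{a_1}$ peel into the full one-point function $\braket{\hat\psi_1^{a_1}}=\tfrac{\bar\chi^{a_1}+\chi^{a_1}}{\sqrt2}$ multiplying the $(n{-}1)$-fold expectation — this is the term in which $\hat\psi_1^{a_1}$ is left uncontracted — while the anticommutator terms generate the sum over all ways of contracting $\hat\psi_1^{a_1}$ with a single partner, multiplied by the expectation of the remaining $n-2$ operators. Invoking the induction hypothesis on these lower-order expectation values and reinserting them reproduces precisely the sum over all contraction patterns of the full product, closing the induction.

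The main obstacle I expect is the Grassmann sign bookkeeping. Concretely, contracting field $1$ with field $j$ accumulates a factor $(-1)^{j-2}$ from anticommuting $A^{a_1}$ past the intervening fields, and I must check that this coincides with the sign attached to a crossed contraction in the convention adopted directly after \cref{wick-contr}. Likewise, for the uncontracted term the parity factor $(-1)^{|O|}$ in the right-peeling identity must be shown to cancel the $(-1)^{n-1}$ from pushing $A^{a_1}$ all the way to the ket, so that the $\chi^{a_1}$ and $\bar\chi^{a_1}$ pieces recombine cleanly into $\braket{\hat\psi_1^{a_1}}$ with no residual sign. This cancellation is the one genuinely delicate point; once the signs are matched term by term the remaining algebra is the standard Wick recursion.
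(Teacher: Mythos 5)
Your proposal is correct and follows essentially the same route as the paper's proof: an induction in which one Majorana operator is split into its creation and annihilation parts via the fermion doubling, one part acts immediately on the adjacent coherent state, and the other is anticommuted through the string, generating exactly one contraction per intervening field, with the traversal sign cancelling against the parity factor so the uncontracted one-point function $\braket{\hat{\psi}_1^{a}}$ recombines cleanly. The only (immaterial) difference is orientation: the paper peels the rightmost operator and pushes its creation part leftward to the bra, whereas you peel the leftmost operator and push its annihilation part rightward to the ket.
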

\begin{proof}
	Let us check the $n=2$ case. Expanding everything out and writing it in 
	$\psi^\dag\psi$-ordering gives
	$$\braket{\hat{\psi}_1^a\hat{\psi}^b_1}=\frac{1}{2}\braket{\hat{\psi}^a\hat{\psi}^b+\hat{\psi}^{a\dagger}\hat{\psi}^b+\hat{\psi}^{a\dagger}\hat{\psi}^{b\dagger}-\hat{\psi}^{b\dagger}\hat{\psi}^{a\dagger}+\eta^{ab}}.$$
	Given the anticommuting nature of Grassmann numbers we can evaluate these 
	operators directly on the coherent states and rewrite this as 
	$$\frac{\eta^{ab}}{2}+\braket{\hat{\psi}_1^a}\braket{\hat{\psi}_1^b}.$$
	Now let us assume that the theorem holds for $n-1$. Then the statement for 
	$n$ follows from examining 
	\begin{align*}
	\Big\langle\left(\prod_{i=1}^{n-1}\hat{\psi}_1^{a_i}\right)\frac{\hat{\psi}^{a_n}+\hat{\psi}^{a_n\dagger}}{\sqrt{2}}\Big\rangle.
	\end{align*}
	The term with $\hat{\psi}^{a_n}$ is already in the correct order, but we 
	need to permute $\hat{\psi}^{a_n\dagger}$ trough all $\hat{\psi}_1^{a_i}$ 
	to the very left. From the anticommutation relation we can show that 
	$$\hat{\psi}_1^a 
	\hat{\psi}^{b\dagger}=-\hat{\psi}^{b\dagger}\hat{\psi}_1^a+\frac{\delta^{ab}}{\sqrt{2}},$$
	and we pick up a contraction with all $n-1$ Majorana fermions and an 
	$(-1)^{n-1}$ in front of the permuted term. Since it is then in the correct 
	order, we can evaluate it on the bra state and permute the Grassmann number 
	back to the right, cancelling the $(-1)^{n-1}$. So all in all we have 
	achieved 
	\begin{align*}
	\braket{\left(\prod_{i=1}^{n-1}\hat{\psi}_1^{a_i}\right)\hat{\psi}_1^{a_n}}=\braket{\prod_{i=1}^{n-1}\hat{\psi}_1^{a_i}}\braket{\hat{\psi}_1^{a_n}}+\sum_{j=1}^{n-1}
	(-1)^{n-j-1}\contraction{}{\hat{\psi}}{_1^{a_j} }{\hat{\psi}}
	\hat{\psi}_1^{a_j}  \hat{\psi}_1^{a_n} \braket{\prod_{\begin{subarray}
			{c} i=1\\i\neq j
			\end{subarray}}^{n}\hat{\psi}_1^{a_i}}.
	\end{align*}
	On the remaining expectation values we can use the theorem for ($n-1$) to 
	get that the desired statement for $n$. The first term gives all the terms 
	where $\hat{\psi}_1^{a_n}$ is not contracted. The second part gives all 
	possibilities of contractions including the $n$th operator 
	$\hat{\psi}_1^{a_n}$. 
\end{proof}

\section{Eikonal phase at 2PM}
\subsection{Feynman-Rules}
\label{sec:feynman}
The GWL sources the 
graviton field and we can divide the resulting types of vertices into powers of 
the gravitational coupling and the spin tensor. Let us first discuss the 
$\mathcal{O}(\kappa)$ vertices. It will be helpful in the ensuing calculations 
to divide diagrams into different topologies depending on their scaling with 
the masses of the different particles. Therefore we will introduce the velocity 
$p_i^\mu=m_iu_i^\mu$ and also rescale the Grassmann fields as 
$\psi_i^a\rightarrow \sqrt{m_i}\psi_i$. Additionally, we introduce some helpful 
shorthand notation
\begin{equation} 
\int \frac{d^D k}{(2\pi)^D}=\int_k
\end{equation}
and \begin{equation} 
(2\pi)^D\delta^{(D)}(x)=\hat{\delta}(x).
\end{equation}
The single graviton vertices then become in momentum space
\begin{equation} 
\begin{tikzpicture}[baseline=(a)]
\begin{feynman}
\vertex (a);
\vertex [right=of a] (b) ;
\vertex [above=of b] (c) {\(k,\mu\nu\)};
\vertex [right=of b] (d) {\(u_i\)};
\diagram*{(a)--[line width=1pt,double distance=1pt] (b)--[line width=1pt,double 
distance=1pt](d), (b)--[photon](c)};
\end{feynman}
\end{tikzpicture}	=-i\frac{\kappa}{2}m_i u_i^\mu u_i^\nu 
\int_k\hat{\delta}(u_ik),
\end{equation} 
and 
\begin{equation} 
\begin{tikzpicture}[baseline=(a)]
\begin{feynman}
\vertex (a);
\vertex [right=of a, dot] (b) {};
\vertex [above=of b] (c) {\(k,\mu\nu\)};
\vertex [right=of b] (d) {\(u_i,S_i\)};
\diagram*{(a)--[line width=1pt,double distance=1pt] (b)--[line width=1pt,double 
distance=1pt](d), (b)--[photon](c)};
\end{feynman}
\end{tikzpicture}	=\frac{\kappa}{2}m_iu_i^\mu 
S_i^{\nu\varrho}\int_k\hat{\delta}(u_ik)k_\varrho.
\end{equation} 
Also at $\mathcal{O}(\kappa)$ we have the three graviton vertex. We adopt the 
form also used in \cite{Kalin:2020mvi} 
\begin{align}
&\nonumber\begin{tikzpicture}[baseline=(b)]
\begin{feynman}
\vertex (b) ;
\vertex [below left=1cm of b] (a) {\(k_1,ab\)};
\vertex [above=1cm of b] (c){\(k_3,ef\)};
\vertex [below right=1cm of b] (d){\(k_2,cd\)};
\diagram*{(a)--[photon] (b)--[photon](d),(b)--[photon](c)};
\end{feynman}
\end{tikzpicture}=\frac{i\kappa}{4}\Big[\nonumber 
4k_1k_2(\eta^{af}\eta^{bd}\eta^{ce}+\eta^{ae}\eta^{bd}\eta^{cf})\\&\nonumber+4k_2k_3(\eta^{ae}\eta^{bc}\eta^{df}+\eta^{ac}\eta^{be}\eta^{df})
 +4k_1k_3(\eta^{ad}\eta^{bf}\eta^{ce}+\eta^{ac}\eta^{bf}\eta^{de})\\&\nonumber 
+(k_1k_2+k_2k_3+k_1k_3)(\eta^{ab}\eta^{cd}\eta^{ef}-2\eta^{ae}\eta^{bf}\eta^{cd}
-2\eta^{ab}\eta^{ce}\eta^{df}-2\eta^{ac}\eta^{bd}\eta^{ef})\\
&\nonumber 
-4\Big(\eta^{ad}\eta^{ce}k_1^{f}k_2^{b}
+\eta^{ae}\eta^{bc}k_1^{d}k_2^{f}
+\eta^{ac}\eta^{de}k_2^{f}k_3^{b}
+\eta^{ae}\eta^{cf}k_2^{b}k_3^{d} 
+\eta^{af}\eta^{ce}k_1^{d}k_3^{b}
+\eta^{ac}\eta^{be}k_1^{f}k_3^{d}\Big)\\
&+2\eta^{ac}\eta^{bd}(k_1^{e}k_2^{f}+k_1^{f}k_2^{e})
+2\eta^{ce}\eta^{df}(k_2^{a}k_3^{b}
+k_2^{b}k_3^{a})+2\eta^{ae}\eta^{bf}(k_1^{c}k_3^{d}
+k_1^{d}k_3^{c})
 \Big].
\end{align}
Let us turn to the $\mathcal{O}(\kappa^2)$ vertices. In momentum space they can 
be written as 
\begin{align}
&\begin{tikzpicture}[baseline=(b)]
\begin{feynman}
\vertex[left=of b] (a);
\vertex (b);
\vertex[right=of b] (c) {\(u_i\)};
\vertex[above left= 1cm of b] (d) {\(k,\mu\nu\)};
\vertex[above right= 1cm of b] (e) {\(l,\varrho\sigma\)};
\diagram*{(a)--[line width=1pt,double distance=1pt](b)--[line width=1pt,double 
distance=1pt](c),(d)--[photon](b)--[photon](e)};
\end{feynman}
\end{tikzpicture}=\frac{m_i\kappa^2}{4}\!\int_{k,l}\!\hat{\delta}(u_i(k+l))\left[klu_i^\mu
 u_i^\nu u_i^\varrho u_i^\sigma J_1-u_i^\nu u_i^\varrho u_i^\sigma l^\mu 
J_2+4i\eta^{\mu\varrho}u_i^\nu u_i^\sigma \right]
\end{align}
and 
\begin{align}
&\begin{tikzpicture}[baseline=(b)]\nonumber
\begin{feynman}
\vertex[left=of b] (a);
\vertex [dot] (b) {};
\vertex[right=of b] (c) {\(u_i,S_i\)};
\vertex[above left= of b] (d) {\(k,\mu\nu\)};
\vertex[above right= of b] (e) {\(l,\varrho\sigma\)};
\diagram*{(a)--[line width=1pt,double distance=1pt](b)--[line width=1pt,double 
distance=1pt](c),(d)--[photon](b)--[photon](e)};
\end{feynman}
\end{tikzpicture}\\=&\nonumber-\frac{i\kappa^2m_i}{4}S_i^{ab}\int_{k,l}\hat{\delta}((k+l)u_i)\Big[2J_2(u_i^\varrho
 u_i^\sigma k^b l^\mu \eta^{\nu a}+2 u_i^\mu u_i^\varrho l^b l^\nu \eta^{\sigma 
a}+2u_i^\mu u_i^\varrho \eta^{\nu[a}k^{d]}\eta^{\sigma 
[b}l^{d]})\\&+2J_1u_i^\mu u_i^\varrho u_i^\sigma \eta^{\nu a}k^b 
kl+2i\left(u_i^\varrho \eta^{\nu a}l^\mu \eta^{\sigma b}+u_i^\varrho l^a 
\eta^{\nu\sigma} \eta^{\mu b}+\frac{1}{2}pl \eta^{\mu\varrho}\eta^{\nu 
b}\eta^{\sigma a}-2u_i^\nu l^b \eta^{\mu\varrho}\eta^{\sigma a}\right)	\Big]
\end{align}
The functions $J_i$ are defined as follows 
\begin{align} 
J_1&=-\frac{i}{2}\left(\frac{1}{(lu_i+i\epsilon)^2}
+\frac{1}{(u_il-i\epsilon)^2}\right),\\
J_2&=\frac{i}{2}\left(\frac{1}{(lu_i+i\epsilon)}
+\frac{1}{(u_il-i\epsilon)}\right).	
\end{align}
One should also remember that the Feynman rules listed above must be 
supplemented with  appropriate symmetrization 
factors, when necessary.
We also observe that in the VEV of \cref{regge}, two Wilson lines are 
 shifted by 
the impact parameter $b$, which should be large in the classical limit. In 
momentum space this is done via the phase $e^{-ibk}$ for each graviton momentum 
flowing into the lower line.

\subsection{1PM eikonal phase}\label{sec:1pm}

We consider the diagrams of \cref{fig:1pm}. 
The first diagram is rather simple to evaluate. We find
\begin{equation} 
-\frac{i\kappa^2m_1m_2}{4}(\gamma^2-\frac{1}{D-2})\int_q 
\frac{\hat{\delta}(u_1q)\hat{\delta}(u_2q)}{q^2}e^{-ibq}.
\end{equation}
We then note that last diagram is obtained from the second by mirroring. This flips the 
direction of $b$ and we therefore 
obtain one diagram from the other by exchanging labels $1\leftrightarrow 
2$ and $b\rightarrow -b$.

The second diagram gives (upon writing the graviton momentum in 
the 
numerator as a derivative acting on the exponential)
\begin{equation} 
-\frac{i\kappa^2m_1m_2\gamma}{4}\frac{\partial}{\partial 
	b^\mu}S_1^{\mu\nu}u_{2\nu}\int_q 
\frac{\hat{\delta}(u_1q)\hat{\delta}(u_2q)}{q^2}e^{-ibq}.
\end{equation}
This directly tells us that the third diagram evaluates to 
\begin{equation} 
\frac{i\kappa^2m_1m_2\gamma}{4}\frac{\partial}{\partial 
	b^\mu}S_2^{\mu\nu}u_{1\nu}\int_q 
\frac{\hat{\delta}(u_1q)\hat{\delta}(u_2q)}{q^2}e^{-ibq}.
\end{equation}
The remaining Fourier transform can be performed in a suitably chosen frame. We 
take the frame where particle 1 is at rest and particle 2 moves along the 
$x$-direction. This means $u_1=(1,\vec{0})$ and 
$u_2=(\gamma,\sqrt{\gamma^2-1},\vec{0}_{D-2})$. Integrating over the delta 
constraints and performing the remaining $D-2$-dimensional Fourier transform 
yields 
\begin{equation} 
\int_q 
\frac{\hat{\delta}(u_1q)\hat{\delta}(u_2q)}{q^2}e^{-ibq}=-\frac{1}{4\pi^{(D-2)/2}}\frac{\Gamma(D/2-2)}{\sqrt{\gamma^2-1}}
|\vec{b}_\perp|^{4-D}\overset{D\rightarrow 4-2\epsilon}{\longrightarrow} 
-\frac{\Gamma(-\epsilon)}{4\pi\sqrt{\gamma^2-1}}(-b^2)^{\epsilon}.
\end{equation}
Here we have written $\vec{b}_\perp$ for the remaining spatial part of $b$ that 
is orthogonal to both velocities. Also in the last step we have written it in a 
covariant way, which assumes that we keep in mind that $b^\mu$ is taken to be 
orthogonal to the velocities. Inserting this integral and performing the 
remaining derivatives with respect to the impact parameter we find \cref{1pm}.

\subsection{2PM eikonal phase}\label{sec:2pm}
We consider the diagrams of \cref{fig:2pm}. 
The first diagram turns out the be (where we have substituted $l\rightarrow 
l-k$ and then $k\rightarrow l-k$)
\begin{align}
&\nonumber\frac{\kappa^4 
	m_2^2m_1}{32}\int_{k,l}\frac{\hat{\delta}(u_2k)\hat{\delta}(u_2l)\hat{\delta}(u_1l)}{k^2(l-k)^2}e^{ibl}\Big[J_{1,l\rightarrow
	k}
k(l-k)(\gamma^2-\frac{1}{D-2})^2+4i(\gamma 
u_2-\frac{1}{D-2}u_1)^2\\&+\frac{4i}{D-2}\left(\gamma^2-\frac{1}{D-2}\right)\Big].
\end{align}
There are a couple of things to point out here. First we can rewrite 
$k(l-k)=\frac{1}{2}(l^2-k^2-(l-k)^2)$. We note that since $l$ appears in the 
exponential it can be understood diagrammatically as the momentum exchange, 
whereas $k$ plays the role of the loop momentum. Therefore, from the rewriting 
only the first term survives because the latter two cancel a propagator pole in 
the denominator. This has the effect of leading to a $\delta(b)$ contribution 
an hence a contact interaction. What remains of the first term is the loop 
integral 
\begin{equation} 
\int_k 
\frac{\hat{\delta}(u_2k)}{k^2(k-l)^2}\left(\frac{1}{(u_1k-i\epsilon)^2}+\frac{1}{(u_1k+i\epsilon)^2}\right).
\end{equation}
We can go into the rest frame of particle 1, which leaves two poles in the 
$k^0$ component. For classical physics we need to integrate over the potential 
region. 
Although the poles lie inside the potential region, they also only occur in on half of the complex plane. Closing the contour in the other half respectively reveals that the integral does not contribute to the eikonal phase. This will 
also happen for all other diagrams involving the $J_1$ function. For the other 
terms in the first diagram we find 
\begin{equation} 
\frac{i\kappa^4m_2^2m_1}{8}\frac{D-3}{D-2}\gamma^2\int_{k,l}\frac{\hat{\delta}(u_2k)\hat{\delta}(u_2l)\hat{\delta}(u_1l)e^{ibl}}{k^2(l-k)^2}.
\end{equation}
We now turn to the second diagram and the result in $D=4$ is given by
\begin{equation} 
\frac{i\kappa^4m_2^2m_1}{32}\int_{k,l}\frac{\hat{\delta}(u_2k)\hat{\delta}(u_2l)\hat{\delta}(u_1l)e^{ibl}}{k^2l^2(l-k)^2}\left[kl(2\gamma^2-1)-l^2\gamma^2-k^2(2\gamma^2-1)-(ku_1)^2\right]
\end{equation}
There are several things to note here. First of all we can express 
$kl=\frac{1}{2}(k^2+l^2-(k-l)^2)$. As before only the $l^2$ term contributes. 
The $u_1k$ term can be evaluated 
using tensor reduction and as shown in \cite{Kalin:2020mvi} this reduces the 
$k$-integral to   
$\frac{\gamma^2-1}{8}\int_k\frac{\hat{\delta}(u_2k)}{k^2(l-k)^2}$. Therefore 
the three-point graviton diagram gives us 
\begin{equation} 
\frac{i\kappa^4m_2^2m_1}{256}(-\gamma^2-3)\int_{k,l}\frac{\hat{\delta}(u_2k)\hat{\delta}(u_2l)\hat{\delta}(u_1l)e^{ibl}}{k^2(l-k)^2}.
\end{equation}
In combination with the first diagram we find 
\begin{equation} 
\frac{i\kappa^4m_2^2m_1}{256}(15\gamma^2-3)\int_{k,l}\frac{\hat{\delta}(u_2k)\hat{\delta}(u_2l)\hat{\delta}(u_1l)e^{ibl}}{k^2(l-k)^2},
\end{equation} 
and after performing the $k$ and $l$ integration this gives us the 2PM scalar 
eikonal phase
\begin{equation} 
i\chi^{(2)}_{S^0}=\frac{i\kappa^4 m_2^2m_1(15\gamma^2-3)}{8\cdot 32^2 \pi 
	\sqrt{\gamma^2-1}\sqrt{-b^2}}.
\end{equation}

The four diagrams involving spin can be evaluated along the same lines. Here we 
need to compute the tensor integral 
\begin{equation} 
I^{\mu\nu}=\int_k \frac{\hat{\delta}(u_1k)k^\mu k^\nu}{k^2(k-l)^2}.
\end{equation}
Expanding the integral on a basis of tensors making heavy use of 
the delta constraint we find
\begin{equation} 
I^{\mu\nu}=\frac{\sqrt{-l^2}P_1^{\mu\nu}}{64},
\end{equation}
thereby resulting in the spin dipole contribution in \cref{2pm}.

\bibliography{ref.bib}


\end{document}